\theoremstyle{plain}
\newtheorem{thm}{Theorem}
\newtheorem{lem}[thm]{Lemma}
\newtheorem{cor}[thm]{Corollary}
\theoremstyle{definition}
\newcommand{\mb}{\mathbf}
\newcommand{\mbb}{\boldsymbol}
\newcolumntype{V}{>{\centering\arraybackslash} m{4cm} }
\newcolumntype{M}{>{\centering\arraybackslash} m{0.25cm} }
\newcolumntype{R}[1]{>{\raggedleft\let\newline\\\arraybackslash\hspace{0pt}}m{#1}}
\newcolumntype{L}[1]{>{\raggedright\let\newline\\\arraybackslash\hspace{0pt}}m{#1}}
\author{Rajen D. Shah \\
  Statistical Laboratory, 
  University of Cambridge\\
  r.shah@statslab.cam.ac.uk
}
\title{Modelling Interactions in High-dimensional Data with Backtracking}
\newcommand{\abs}[1]{\left| #1\right|}
\newcommand{\norm}[1]{\left\lVert #1 \right\rVert}
\newcommand{\norms}[1]{\| #1 \|} 
\newcommand{\pr}{\mathbb{P}}
\newcommand{\R}{\mathbb{R}}
\newcommand{\E}{\mathbb{E}}
\newcommand{\Cov}{\mathrm{Cov}}
\newcommand{\Var}{\mathrm{Var}}
\newcommand{\argmin}[1]{\underset{#1}{\operatorname{arg}\operatorname{min}}\;}
\newcommand{\sgn}{\mathrm{sgn}}
\newcommand{\iid}{\stackrel{\mathrm{i.i.d.}}{\sim}}
\author{Rajen D. Shah \\
  Statistical Laboratory\\
  University of Cambridge\\
  r.shah@statslab.cam.ac.uk
}
\title{Modelling Interactions in High-dimensional Data with Backtracking}
\begin{document}
\maketitle
\begin{abstract}%
We study the problem of high-dimensional regression when there may be interacting variables. Approaches using sparsity-inducing penalty functions such as the Lasso can be useful for producing interpretable models. However, when the number variables runs into the thousands, and so even two-way interactions number in the millions, these methods may become computationally infeasible. Typically variable screening based on model fits using only main effects must be performed first. One problem with screening is that important variables may be missed if they are only useful for prediction when certain interaction terms are also present in the model.

To tackle this issue, we introduce a new method we call Backtracking. It can be incorporated into many existing high-dimensional methods based on penalty functions, and works by building increasing sets of candidate interactions iteratively. Models fitted on the main effects and interactions selected early on in this process guide the selection of future interactions. By also making use of previous fits for computation, as well as performing calculations is parallel, the overall run-time of the algorithm can be greatly reduced.

The effectiveness of our method when applied to regression and classification problems is
demonstrated on simulated and real data sets.
In the case of using Backtracking with the Lasso, we also give some theoretical support for our procedure.
\end{abstract}

\section{Introduction} \label{sec:Intro}
In recent years, there has been a lot of progress in the field of high-dimensional regression.
Much of the development has centred around the Lasso \citep{tibshirani96regression}, which given a vector of responses $\mb Y \in \R^n$ and design matrix $\mb{X} \in \R^{n \times p}$, solves
 \begin{equation} \label{eq:Lasso}
 (\hat{\mu}, \hat{\mbb\beta}) :=\argmin{(\mu, \mbb \beta) \in \R \times \R^p} \{ \tfrac{1}{2n}\norms{\mb{Y} - \mu\textbf{1} - \mb{X}\mbb{\beta}}_2 ^2 + \lambda \norms{\mbb \beta}_1\},
\end{equation}
where $\textbf{1}$ is an $n$-vector of ones and the regularisation parameter $\lambda$ controls the relative contribution of the penalty term to the objective.
The many extensions of the Lasso allow most familiar models from classical (low-dimensional) statistics to now be fitted in situations where the number of variables $p$ may be tens of thousands and even greatly exceed the number of observations $n$ (see the monograph \citet{buhlmann2011statistics} and references therein).

However, despite the advances, fitting models with interactions remains a challenge. Two issues that arise are:
\begin{enumerate}[(i)]
\item Since there are $p(p-1)/2$ possible first-order interactions, the main effects can be swamped by the vastly more numerous interaction terms and without proper regularisation, stand little chance of being selected in the final model (see Figure~\ref{fig:all_interact}).
\item Monitoring the coefficients of all the interaction terms quickly becomes infeasible as $p$ runs into the thousands.
\end{enumerate}

\subsection{Related Work}
For situations where $p < 1000$ or thereabouts and the case of two-way interactions, a lot of work has been done in recent years to address this need. To tackle (i), many of the proposals use penalty functions and constraints designed to enforce that if an interaction term is in the fitted model, one or both main effects are also present \citep{LinZhang2006, Zhaoetal2009, Yuanetal2009, RadchenkoJames2010, Jenatton_etal2011, Bach_etal2012, Bach_etal2012a, Bien_etal2013, Lim2015, Haris2015}. See also \citet{Turlach2004} and \citet{Yuanetal2007}, which consider modifications of the LAR algorithm \citet{efron04least} that impose this type of condition.

In the moderate-dimensional setting that these methods are designed for, the computational issue (ii) is just about manageable. However, when $p$ is larger---the situation of interest in this paper---it typically becomes necessary to narrow the search for interactions. Comparatively little work has been done on fitting models with interactions to data of this sort of dimension. An exception is the method of Random Intersection Trees \citep{Shah2014}, which does not explicitly restrict the search space of interactions. However this is designed for a classification setting with a binary predictor matrix and does not fit a model but rather tries to find interactions that are marginally informative.

One option is to screen for important variables and only consider interactions involving the selected set. \citet{Wu_etal2010} and others take this approach: the Lasso is first used to select main effects; then interactions between the selected main effects are added to the design matrix, and the Lasso is run once more to give the final model.

The success of this method relies on all main effects involved in interactions being selected in the initial screening stage. However, this may well not happen. Certain interactions may need to be included in the model before some main effects can be selected. To address this issue,
 \citet{Bickel_etal2010} propose a procedure involving sequential Lasso fits which, for some predefined number $K$, selects $K$ variables from each fit and then adds all interactions between those variables as candidate variables for the following fit. The process continues until all interactions to be added are already present. However, it is not clear how one should choose $K$: a large $K$ may result in a large number of spurious interactions being added at each stage, whereas a small $K$ could cause the procedure to terminate before it has had a chance to include important interactions.
 

Rather than adding interactions in one or more distinct stages, when variables are selected in a greedy fashion, the set of candidate interactions can be updated after each selection. This dynamic updating of interactions available for selection is present in the popular MARS procedure of \citet{MARS}. One potential problem with this approach is that particularly in high-dimensional situations, overly greedy selection can sometimes produce unstable final models and predictive performance can suffer as a consequence.

The iFORT method of \citet{Hao2014} applies forward selection to a dynamically updated set of candidate interactions and main effects, for the purposes of variable screening. In this work, 
we propose a new method we call Backtracking, for incorporating a similar model building strategy to that of MARS and iFORT into methods based on sparsity-inducing penalty functions. Though greedy forward selection methods often work well, penalty function-based methods such as the Lasso can be more stable (see \citet{efron04least}) and offer a useful alternative. 


\subsection{Outline of the Idea}
When used with the Lasso, Backtracking begins by computing the Lasso solution path, decreasing $\lambda$ from $\infty$. A second solution path, $P_2$, is then produced, where the design matrix contains all main effects, and also the interaction between the first two active variables in the initial path. Continuing iteratively, subsequent solution paths $P_3,\ldots, P_T$ are computed where the set of main effects and interactions in the design matrix for the $k$th path is determined based on the previous path $P_{k-1}$. Thus if in the third path, a key interaction was included and so variable selection was then more accurate, the selection of interactions for 
all 
future paths would benefit. In this way information is used as soon as it is available, rather than at discrete stages as with the method of \citet{Bickel_etal2010}. In addition, if all important interactions have already been included by  $P_3$, we have a solution path unhindered by the addition of further spurious interactions. 

It may seem that a drawback of our proposed approach is that the computational cost of producing all $T$ solution paths will usually be unacceptably large. However, computation of the full collection of solution paths is typically very fast. This is because rather than computing each of the solution paths from scratch, for each new solution path $P_{k+1}$, we first track along the previous path $P_k$ to find where $P_{k+1}$ departs from $P_k$. This is the origin of the name Backtracking. Typically, checking whether a given trial solution is on a solution path requires much less computation than calculating the solution path itself, and so this Backtracking step is rather quick. Furthermore, when the solution paths do separate, the tail portions of the paths can be computed in parallel.

An R \citep{R} package for the method is available on the author's website.

\subsection{Organisation of the Paper}
The rest of the paper is organised as follows. In Section~\ref{sec:motivation} we describe an example which provides some motivation for our Backtracking method. In Section~\ref{sec:Backtrack_lasso} we develop our method in the context of the Lasso for the linear model. In Section~\ref{sec:Extensions}, we describe how our method can be extended beyond the case of the Lasso for the linear model. In  Section~\ref{sec:numerical} we report the results of some simulation experiments and real data analyses that demonstrate the effectiveness of Backtracking. Finally, in Section~\ref{sec:theory}, we present some theoretical results which aim to give a deeper understanding of the way in which Backtracking works. Proofs are collected in the appendix.

\section{Motivation} \label{sec:motivation}
In this section we introduce a toy example where approaches that select candidate interactions based on selected main effects will tend to perform poorly. We consider a linear model with interactions involving a design matrix $\mb X \in \R^{n \times p}$ with $n=200$, $p=500$ and where
\begin{equation} \label{eq:motivation}
 Y_i = \sum_{j = 1} ^6 \beta_j X_{ij} + \beta_7 X_{i1} X_{i2} + \beta_8 X_{i3} X_{i4} + \beta_9 X_{i5} X_{i6} + \varepsilon_i, \quad \varepsilon_i \sim N(0, \sigma^2), \quad i = 1, \ldots, n.
\end{equation}
We take $\mb X$ with i.i.d.\ rows having a distribution such that
$X_{i5}$ is uncorrelated with $\{X_{ij}: j \neq 5\}$. We then choose $\beta_1, \ldots, \beta_9$ in such a way that $X_{i5}$ is also uncorrelated with the response yet $\beta_5 \neq 0$. The precise construction is detailed in the appendix.

In order to select variable 5 using that Lasso, we would need to have already selected some important interactions.
Thus if we first select important main effects using the Lasso, for example, it is very unlikely that variable 5 will be selected. Then if we add all two-way interactions between the selected variables and fit the Lasso once more, the interaction between variables 5 and 6 will not be included. Of course, one can again add interactions between selected variables and compute another Lasso fit, and then there is a chance the interaction will be selected. Thus it is very likely that at least three Lasso fits will be needed in order to select the right variables.

Figure~\ref{fig:main_eff} shows the result of applying the Lasso to data generated according to \eqref{eq:motivation}, $\sigma$ chosen to give a signal-to-noise ratio (SNR) of 4, and
\[
 \mbb \beta = (-1.25, -0.75, 0.75, -0.5, -2, 1.5, 2, 2, 1)^T.
\]
As expected, we see variable 5 is nowhere to be seen and instead many unwanted variables are selected as $\lambda$ is decreased. Figure~\ref{fig:all_interact} illustrates the effect of including all $p(p-1) / 2$ possible interactions in the design matrix. Even in our rather moderate-dimensional situation, we are not able to recover the true signal. Though all the true interaction terms are selected, now neither variable 4 nor variable 5 are present in the solution paths and many false interactions are selected. 

Although this example is rather contrived, it illustrates how sometimes the right interactions need to be augmented to the design matrix in order for certain variables to be selected. Even when interactions are only present if the corresponding main effects are too, main effects can be missed by a procedure that does not consider interactions. In fact, we can see the same phenomenon occurring when the design matrix has i.i.d.\ Gaussian entries (see Section~\ref{sec:simulations}).
Thus multiple Lasso fits might be needed to have any chance of selecting the right model.

This raises the question of which tuning parameters to use in the multiple Lasso fits. One option, which we shall refer to as the iterated Lasso, is to select tuning parameters by cross-validation each time. A drawback of this approach, though, is that the number of interactions to add can be quite large if cross-validation chooses a large active set. This is often the case when the presence of interactions makes some important main effects hard to distinguish from noise variables in the initial Lasso fit. Then cross-validation may choose a low $\lambda$ in order to try to select those variables, but this would result in many noise variables also being included in the active set.

We take an alternative approach here and include suspected interactions in the design matrix as soon as possible. That is, if we progress along the solution path from $\lambda = \infty$, and two variables enter the model, we immediately add their interaction to the design matrix and start computing the Lasso again. We could now disregard the original path, but there is little to lose, and possibly much to gain, in continuing the original path in parallel with the new one.
We can then repeat this process, adding new 
interactions when necessary, and restarting the Lasso, whilst still continuing all previous paths in parallel.
We show in the next section how computation can be made very fast since many of these solution paths will share the same initial portions.

\begin{figure}
        \begin{subfigure}[b]{0.45\textwidth}
                \centering
                \includegraphics[width=\textwidth]{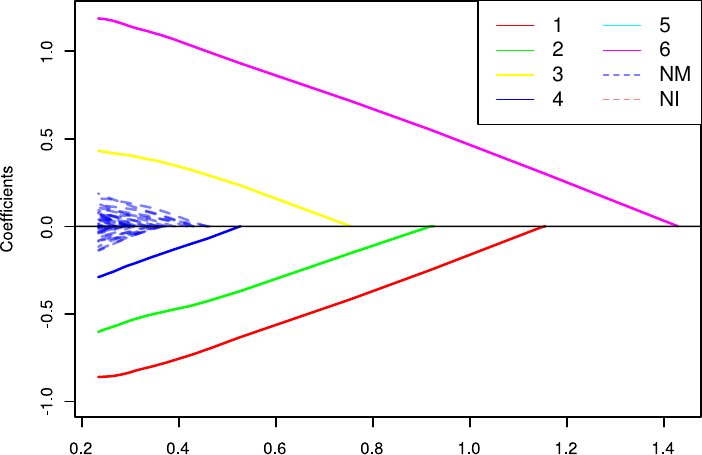}
                \caption{Main effects only}
                \label{fig:main_eff}
        \end{subfigure}
\qquad
        \begin{subfigure}[b]{0.45\textwidth}
                \centering
                \includegraphics[width=\textwidth]{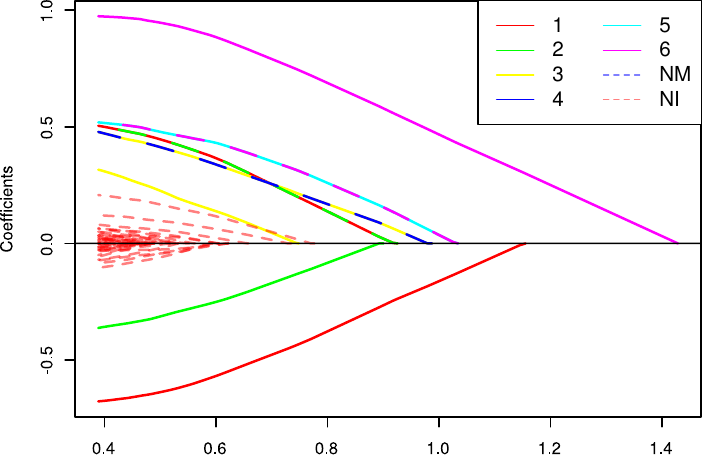}
                \caption{All interactions added}
                \label{fig:all_interact}
        \end{subfigure}

\vspace{14pt}
	\begin{subfigure}[b]{0.45\textwidth}
                \centering
                \includegraphics[width=\textwidth]{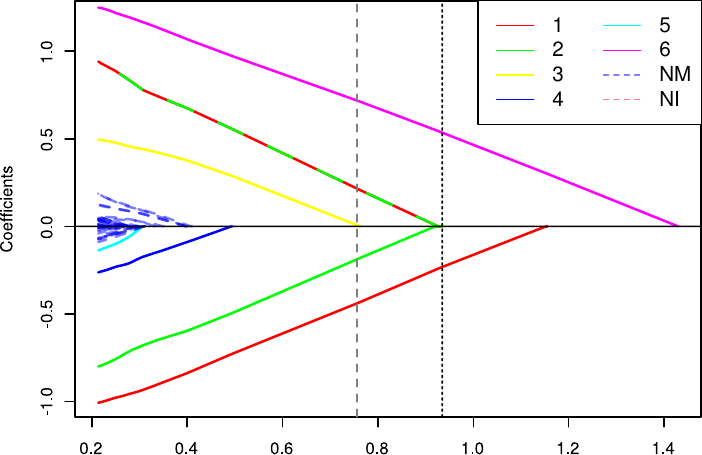}
                \caption{Step 3: $\{1,2\}$, $\{2,6\}$. $\{1,6\}$ added in step 2.}
                \label{fig:step_3}
        \end{subfigure}
\qquad
	\begin{subfigure}[b]{0.45\textwidth}
                \centering
                \includegraphics[width=\textwidth]{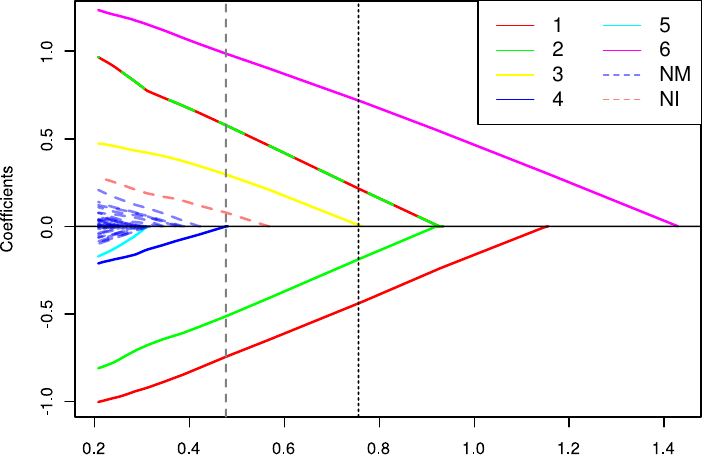}
                \caption{Step 4: $\{1,3\}$, $\{2,3\}$, $\{3,6\}$.}
                \label{fig:step_4}
        \end{subfigure}

\vspace{14pt}
	\begin{subfigure}[b]{0.45\textwidth}
                \centering
                \includegraphics[width=\textwidth]{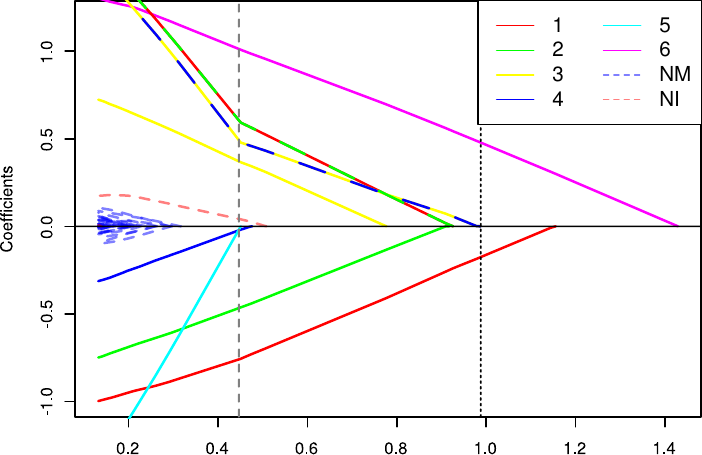}
                \caption{Step 5: $\{1,4\}$, $\{2,4\}$, $\{3,4\}$, $\{4,6\}$.}
                \label{fig:step_5}
        \end{subfigure}
\qquad
	\begin{subfigure}[b]{0.45\textwidth}
                \centering
                \includegraphics[width=\textwidth]{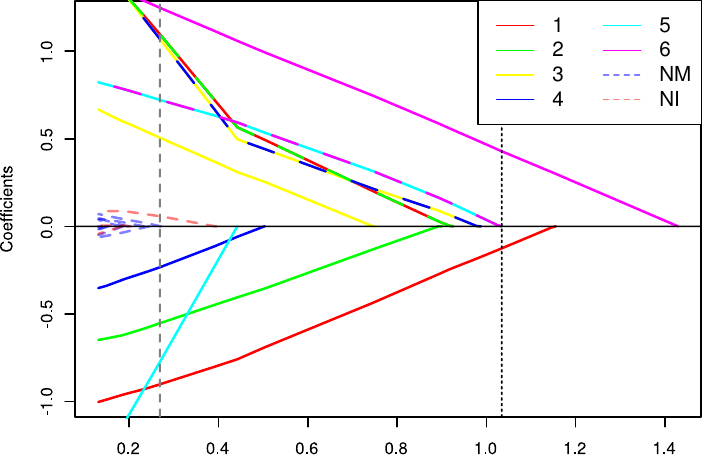}
                \caption{Step 6: $\{1,5\}$, $\{2,5\}$, $\{3,5\}$, $\{4,5\}$, $\{5,6\}$.}
                \label{fig:step_6}
        \end{subfigure}
   \caption{For data generated as described in Section~\ref{sec:motivation}, the coefficient paths against $\lambda$ of the Lasso with main effects only, (a); the Lasso with all interactions added, (b); and Backtracking with $k=3, \ldots, 6$, ((c)--(d)); when applied to the example in Section~\ref{sec:motivation}. Below the Backtracking solution paths we give $C_k \setminus C_{k-1}$: the interactions which have been added in the current step. The solid red, green, yellow, blue, cyan and magenta lines trace the coefficients of variables $1, \ldots, 6$ respectively, with the alternately coloured lines representing the corresponding interactions. The dotted blue and red coefficient paths indicate noise main effect (`NM') and interaction (`NI') terms respectively. Vertical dotted black and dashed grey lines give the values of $\lambda^\mathrm{start} _k$ and $\lambda^\mathrm{add} _k$ respectively.}\label{fig:lasso_plots}
\end{figure}

\section{Backtracking with the Lasso} \label{sec:Backtrack_lasso}
In this section we introduce a version of the Backtracking algorithm applied to the Lasso
\eqref{eq:Lasso}.
First, we present a naive version of the algorithm, which is easy to understand. Later in Section~\ref{sec:speedup}, we show that this algorithm performs a large number of unnecessary calculations, and we give a far more efficient version.
\subsection{A Naive Algorithm}
As well as a base regression procedure, the other key ingredient that Backtracking requires is a way of suggesting candidate interactions based on selected main effects, or more generally a way of suggesting higher-order interactions based on lower-order interactions. In order to discuss this and present our algorithm, we first introduce some notation concerning interactions.

Let $\mb X$ be the original $n \times p$ design matrix, with no interactions. In order to consider interactions in our models, rather than indexing variables by a single number $j$, we use subsets of $\{1, \ldots, p\}$. Thus by variable $\{1, 2\}$, we mean the interaction between variables 1 and 2, or in our new notation, variables $\{1\}$ and $\{2\}$. When referring to main effects $\{j\}$ however, we will often omit the braces.
As we are using the Lasso as the base regression procedure here, interaction $\{1, 2\}$ will be the componentwise product of the first two columns of $\mb X$. We will write $\mb X_v \in \R^n$ for variable $v$.

The choice of whether and how to scale and centre interactions and main effects can be a rather delicate one, where domain knowledge may play a key role. In this work, we will centre all main effects, and scale them to have $\ell_2$-norm $\sqrt{n}$. The interactions will be created using these centred and scaled main effects, and they themselves will also be centred and scaled to have $\ell_2$-norm $\sqrt{n}$.

For $C$ a set of subsets of $\{1,\ldots,p\}$ we can form a modified design matrix $\mb X_C$, where the columns of $\mb X_C$ are given by the variables in $C$, centred and scaled as described above. Thus $C$ is the set of candidate variables available for selection when design matrix $\mb X_C$ is used. This subsetting operation will always be taken to have been performed before any further operations on the matrix, so in particular $\mb X_C ^T$ means $(\mb X_C)^T$.

We will consider 
all associated vectors and matrices as indexed by variables, so we may speak of component $\{1, 2\}$ of $\mbb\beta$, denoted $\beta_{\{1, 2\}}$, if $\mbb \beta$ were multiplying a design matrix which included $\{1, 2\}$. Further, for any collection of variables $A$, we will write $\mbb \beta_A$ for the subvector whose components are those indexed by $A$.
To represent an arbitrary variable which may be an interaction, we shall often use $v$ or $u$ and reserve $j$ to index main effects.

We will often need to express the dependence of the Lasso solution $\hat{\mbb \beta}$ \eqref{eq:Lasso} on the tuning parameter $\lambda$ and the design matrix used. We shall write $\hat{\mbb \beta}(\lambda, C)$ when $\mb X_C$ is the design matrix. We will denote the set of active components of a solution $\hat{\mbb \beta}$ by $\mathcal{A}(\hat{\mbb \beta}) = \{v: \hat{\beta}_v \neq 0\}$.

We now introduce a function $\mathcal{I}$ that given a set of variables $A$, suggests a set of interactions to add to the design matrix.
The choice of $\mathcal{I}$ we use here is as follows:
\[
 \mathcal{I}(A) = \{v \subseteq \{1, \ldots, p\}: \text{ for all } u \subsetneq v, \, u \neq \emptyset, \, u \in A\}.
\]
In other words, $\mathcal{I}(A)$ is the set of variables not in $A$, all of whose corresponding lower order interactions are present in $A$. To ease notation, when $A$ contains only main effects $j_1,\ldots,j_s$, we will write  $\mathcal{I}(j_1,\ldots,j_s)=\mathcal{I}(A)$.
For example, $\mathcal{I}(1,2) = \{\{1, 2\}\}$, and $\mathcal{I}(1,2,3) = \{\{1, 2\}, \{2, 3\}, \{1, 3\}\}$. Note $\{1, 2, 3\} \notin \mathcal{I}(1,2,3)$ as the lower order interaction $\{1, 2\}$ of $\{1, 2, 3\}$ is not in $\{\{1\}, \{2\}, \{3\}\}$, for example. Other choices for $\mathcal{I}$ can be made, and we discuss some further possibilities in Section~\ref{sec:Extensions}.

Backtracking relies on a path algorithm for computing the Lasso on a grid of $\lambda$ values $\lambda_1 > \cdots \lambda_L$. Several algorithms are available and coordinate descent methods \citep{friedman2010regularization} appears to work well in practice.


We are now in a position to introduce a naive version of our Backtracking algorithm applied to the Lasso (Algorithm~\ref{alg:naive}). We will assume that the response $\mb Y$ is centred in addition to the design matrix, so no intercept term is necessary. 

\begin{algorithm}
\caption{A naive version of Backtracking with the Lasso} \label{alg:naive}
Set $T$ to be the (given) maximum number of candidate interaction sets to generate.
Let the initial candidate set consist of just main effects: $C_1=\{\{1\},\ldots,\{p\}\}$. Set the index for the candidate sets $k=1$. Let $\lambda_1^\text{start}=\lambda_1$, the largest $\lambda$ value on the grid.
In the steps which follow, we maintain a record of the set of variables which have been non-zero at any point in the algorithm up to the current point (an ``ever active set'', $A$).
\begin{enumerate}
\item Compute the solution path of the Lasso with candidate set $C_k$ from $\lambda_k^\text{start}$ onwards until the ever active set $A$ has $\mathcal{I}(A) \nsubseteq C_k$ (if the smallest $\lambda$ value on the grid is reached then go to 5). Let the $\lambda$ value where this occurs be $\lambda_k^\text{add}$. We will refer to this solution path as $P_k$.
\item Set $C_{k+1} = C_k \cup \mathcal{I}(A)$ so the next candidate set contains all interactions between variables in the ever active set.
\item Set $\lambda_{k+1}^\text{start}=\lambda_1$.
\item Increment $k$. If $k > T$ go to 5, otherwise go back to 1.
\item For each $k$ complete the solution path $P_k$ by continuing it until $\lambda = \lambda_L$. Computing these final pieces of the solution paths can be done in parallel.
\end{enumerate}
\end{algorithm}

The algorithm computes Lasso solution paths whose corresponding design matrices include interactions chosen based on previous paths. 
The quantity $\lambda^{\mathrm{add}}_k$ records the value of $\lambda$ at which interaction terms were added to the set of candidates $C_k$. Here $\lambda_k^\text{start}$ is a redundant quantity and can be replaced everywhere with $\lambda_1$ to give the same algorithm. We include it at this stage though to aid with the presentation of an improved version of the algorithm where $\lambda_k^\text{start}$ in general takes values other than $\lambda_1$.
We note that the final step of completing the solution paths can be carried out as the initial paths are being created, rather than once all initial paths have been created.
Though here the algorithm can include three-way or even higher order interactions, it is straightforward to restrict the possible interactions to be added to first-order interactions, for example.

\subsection{An Improved Algorithm} \label{sec:speedup}
The process of performing multiple Lasso fits is computationally cumbersome, and an immediate gain in efficiency can be realised by noticing that the final collection of solution paths is in fact a tree of solutions: many of the solution paths computed will share the same initial portions.

To discuss this, we first recall the KKT conditions for the Lasso dictate that $\hat{\mbb \beta}$ is a solution to \eqref{eq:Lasso} when the design matrix is $\mb X_C$ if and only if
\begin{align}
 \tfrac{1}{n}\mb X_v ^T (\mb Y - \mb X_C \hat{\mbb \beta}) & = \lambda \sgn(\hat{\beta}_v ) \quad \text{for } \hat{\beta}_v \neq 0 \label{eq:KKT_non_zero} \\
\tfrac{1}{n}|\mb X_v ^T (\mb Y - \mb X_C \hat{\mbb \beta})| & \leq \lambda \quad \text{for } \hat{\beta}_v = 0. \label{eq:KKT_zero}
\end{align}
Note the $ \hat{\mu} \mb X_v ^T \textbf{1}$ term vanishes as the columns of $\mb X_C$ are centred.

We see that if for some $\lambda$
\begin{equation} \label{eq:KKT_check_Lasso}
 \tfrac{1}{n}\norms{\mb X_{C_{k+1} \setminus C_k} ^T (\mb Y - \mb X_{C_k} \hat{\mbb \beta}(\lambda, C_k))}_\infty \leq \lambda,
\end{equation}
then
\begin{align*}
 \hat{\mbb \beta}_{C_{k+1} \setminus C_k} (\lambda, C_{k+1}) = \mb 0, \qquad
\hat{\mbb \beta}_{C_k} (\lambda, C_{k+1}) = \hat{\mbb \beta}(\lambda, C_k).
\end{align*}
Thus given solution path $P_k$, we can attempt to find the smallest $\lambda_l$ such that \eqref{eq:KKT_check_Lasso} holds. Up to that point then,  path $P_{k+1}$ will coincide with $P_k$ and so those Lasso solutions need not be re-computed. Note that verifying \eqref{eq:KKT_check_Lasso} is a computationally simple task requiring only $O(|C_{k+1}\setminus C_k| n)$ operations.

Our final Backtracking algorithm therefore replaces step 3 of Algorithm~\ref{alg:naive} with the following:
\begin{enumerate}
\item[3a.] Find the smallest $\lambda_1 \geq \lambda_l \geq \lambda_k^\text{add}$ such that \eqref{eq:KKT_check_Lasso} holds with $\lambda=\lambda_l$ and set this to be $\lambda_{k+1}^\text{start}$. If no such $\lambda_l$ exists, set $\lambda_{k+1}^\text{start}$ to be $\lambda_1$. 
\end{enumerate}

Figures~\ref{fig:step_3}--\ref{fig:step_6} show steps 3--6 (i.e. $k=3,\ldots,6$) of Backtracking applied to the example described in Section~\ref{sec:motivation}. Note that Figure~\ref{fig:main_eff} is in fact step 1. Step 2 is not shown as the plot looks identical to that in Figure~\ref{fig:main_eff}. We see that when $k=6$, we have a solution path where all the true variable and interaction terms are active before any noise variables enter the coefficient plots.

We can further speed up the algorithm by first checking if $P_k$ coincides with $P_{k+1}$ at $\lambda_k^\text{add}$. If not, we can perform a bisection search to find any point where $P_k$ and $P_{k+1}$ agree, but after which they disagree. This avoids checking \eqref{eq:KKT_check_Lasso} for every $\lambda_l$ up to $\lambda_k^\text{add}$. We will work with the simpler version of Backtracking here using step 3a, but use this faster version in our implementation.

\section{Further Applications of Backtracking} \label{sec:Extensions}
Our Backtracking algorithm has been presented in the context of the Lasso for the linear model. However, the real power of the idea is that it can be incorporated into any method that produces a path of increasingly complex sparse solutions by solving a family of convex optimisation problems parametrised by a tuning parameter. For the Backtracking step, the KKT conditions for these optimisation problems provide a way of checking whether a given trial solution is an optimum. As in the case of the Lasso, checking whether the KKT conditions are satisfied typically requires much less computational effort than computing a solution from scratch. Below we briefly sketch some applications of Backtracking to a few of the many possible methods with which it can be used.

\subsection{Multinomial Regression} \label{sec:multinomial}
An example, which we apply to real data in Section~\ref{sec:real_data}, is multinomial regression with a group Lasso \citep{yuan2006model} penalty. Consider $n$ observations of a categorical response that takes $J$ levels, and $p$ associated covariates. Let $\mb Y$ be the indicator response matrix, with $ij$th entry equal to 1 if the $i$th observation takes the $j$th level, and 0 otherwise. We model
\begin{equation*}
 \pr(Y_{ij} = 1) := \Pi_{ij} (\mbb \mu^*, \mbb \beta^*; \mb X_{S^*}) := \frac{\exp \left(\mu^* _j  + \left( \mb X_{S^*} \mbb \beta^* _j \right)_i \right)}{\sum_{j' = 1} ^J \exp \left(\mu^* _{j'}  + \left( \mb X_{S^*} \mbb \beta^* _{j'} \right)_i \right)}.
\end{equation*}
 Here $\mb \mu^*$ is a vector of intercept terms and  $\mbb \beta^*$ is a $|S^*| \times J$ matrix of coefficients; $\mbb \beta^* _j$ denotes the $j$th column of $\mbb \beta^*$. This model is over-parametrised, but regularisation still allows us produce estimates of $\mbb\mu^*$ and $\mbb\beta^*$ and hence also of $\mbb\Pi$ (see \citet{friedman2010regularization}). When our design matrix is $\mb X_C$, these estimates are given by $(\hat{\mbb\mu}, \hat{\mbb \beta}) := \argmin{\mbb\mu, \mbb\beta} Q(\mbb\mu, \mbb\beta; \lambda)$ where
\begin{multline*}
 Q(\mbb\mu, \mbb\beta; \lambda) := \tfrac{1}{n} \sum_{j = 1} ^J \mb Y_j ^T (\mu_j \textbf{1} + \mb X_C \mbb\beta_j) - 
 \tfrac{1}{n} \textbf{1}^T \log \left( \sum_{j = 1} ^J \exp( \mu_j \textbf{1} + \mb X_C \mbb\beta_j \right) + \lambda \sum_{v \in C} \norms{(\mbb\beta^T)_v}_2.
\end{multline*}
The functions $\log$ and $\exp$ are to be understood as applied componentwise and the rows of $\mbb\beta$ are indexed by elements of $C$. To derive the Backtracking step for this situation, we turn to the KKT conditions which characterise the minima of $Q$:
\begin{align*}
\tfrac{1}{n} \{\mb Y^T - \mbb\Pi^T (\hat{\mbb\mu}, \hat{\mbb \beta}; \mb X_C) \} \textbf{1} & = \mb 0, \\
 \tfrac{1}{n} \{\mb Y^T - \mbb\Pi^T (\hat{\mbb\mu}, \hat{\mbb \beta}; \mb X_C) \} \mb X_v & = -\lambda \frac{({\hat{\mbb \beta}^T})_v}{\norms{({\hat{\mbb \beta}^T})_v}_2} \quad \text{for } ({\hat{\mbb \beta}^T})_v \neq \mb 0, \\
\tfrac{1}{n} \norms{\{\mb Y^T - \mbb \Pi^T (\hat{\mbb \mu}, \hat{\mbb \beta}; \mb X_C) \} \mb X_v}_2 & \leq \lambda \quad \text{for } ({\hat{\mbb \beta}^T})_v = \mb 0.
\end{align*}
Thus, analogously to \eqref{eq:KKT_check_Lasso}, for $D \supsetneq C$, $(\hat{\mbb \beta} ^T (\lambda, D))_{D \setminus C} = \mb 0$ and $(\hat{\mbb \beta}^T (\lambda, D))_C = \hat{\mbb \beta} ^T (\lambda, C)$ if and only if
\begin{equation*}
 \max_{v \in D \setminus C} \tfrac{1}{n} \norms{\{ \mb Y^T - \mbb \Pi^T (\hat{\mbb\mu}(\lambda, C), \hat{\mbb \beta}(\lambda, C); \mb X_{C}) \} \mb X_v}_2 \leq \lambda.
\end{equation*}

\subsection{Structural Sparsity}
Although in our Backtracking algorithm, interaction terms are only added as candidates for selection when all their lower order interactions and main effects are active, this hierarchy in the selection of candidates does not necessarily follow through to the final model: one can have first-order interactions present in the final model without one or more of their main effects, for example. One way to enforce the hierarchy constraint in the final model is to use a base procedure which obeys the constraint itself. Examples of such base procedures are provided by the Composite Absolute Penalties (CAP) family \citep{Zhaoetal2009}.

Consider the linear regression setup 
with interactions. For simplicity we only describe Backtracking with first-order interactions. Let $C$ be the candidate set and let $I = C \setminus C_1$ be the (first-order) interaction terms in $C$. In order to present the penalty, we borrow some notation from Combinatorics. Let $C_1 ^{(r)}$ denote the set of $r$-subsets of $C_1$. For $A \subseteq C_1 ^{(r)}$ and $r \geq 1$, define
\begin{align*}
 \partial_l (A) & = \{v \in C_1 ^{(r-1)} : v \subset u \text{ for some } u \in A\} \\
\partial_u (A) & = \{v \in C_1 ^{(r+1)} : v \subset u \text{ for some } u \in A\}
\end{align*}
These are known as the \emph{lower shadow} and \emph{upper shadow} respectively \citep{Bollobas}.

Our objective function $Q$ is given by
\begin{equation*}
 Q(\mu, \mbb\beta) = \tfrac{1}{2n} \norms{\mb Y - \mu \textbf{1} - \mb X_C\mbb\beta}_2 ^2 + \lambda \norms{\mbb\beta_{C_1 \setminus \partial_l (I)}}_1 + \lambda \sum_{v \in \partial_l (I)} \norms{\mbb\beta_{\{v\} \cup (\partial_u (\{v\}) \cap I)}}_\gamma + \lambda \norms{\mbb\beta_I}_1,
\end{equation*}
where $\gamma > 1$. For example, if $C = \{\{1\}, \ldots, \{4\}, \{1, 2\}, \{2, 3\}\}$, then omitting the factor of $\lambda$, the penalty terms in $Q$ are
\[
 |\beta_{4}| + \norms{(\beta_{1}, \beta_{\{1, 2\}})^T}_\gamma + \norms{(\beta_{2}, \beta_{\{1, 2\}}, \beta_{\{2, 3\}})^T}_\gamma + \norms{(\beta_{3}, \beta_{\{2, 3\}})^T}_\gamma + |\beta_{\{1, 2\}}| + |\beta_{\{2, 3\}}|.
\]
The form of this penalty forces interactions to enter the active set only after or with their corresponding main effects.

The KKT conditions for this optimisation take a more complicated form than those for the Lasso. Nevertheless, checking they hold for a trial solution is an easier task than computing a solution.

\subsection{Nonlinear Models}
If a high-dimensional additive modelling method \citep{Ravikumaretal2009, Meier2009} is used as the base procedure, it is possible to fit nonlinear models with interactions. Here each variable is a collection of basis functions, and to add an interaction between variables, one adds the tensor product of the two collections of basis functions, penalizing the new interaction basis functions appropriately. Structural sparsity approaches can also be used here. The VANISH method of \citet{RadchenkoJames2010} uses a CAP-type penalty in nonlinear regression, and this can be used as a base procedure in a similar way to that sketched above.

\subsection{Introducing more Candidates}
In our description of the Backtracking algorithm, we only introduce an interaction term when \emph{all} of its lower order interactions and main effects are active. Another possibility, in the spirit of MARS \citep{MARS}, is to add interaction terms when \emph{any} of their lower order interactions or main effects are active. As at the $k$th step of Backtracking, there will be roughly $kp$ extra candidates, an approach that can enforce the hierarchical constraint may be necessary to allow main effects to be selected from amongst the more numerous interaction candidates. The key point to note is that if the algorithm is terminated after $T$ steps, we are having to deal with roughly at most $Tp$ variables rather than $O(p^2)$, the latter coming from including all first-order interactions.

Another option proposed by a referee is to augment the initial set of candidates with interactions selected through a simple marginal screening step. If only pairwise interactions are considered here, then this would require $O(p^2 n)$ operations. Though this would be infeasible for very large $p$, for moderate $p$ this would allow important interactions whose corresponding main effects are not strong to be selected.

\section{Numerical Results} \label{sec:numerical}
In this section we evaluate the performance of Backtracking on both simulated and real data sets.
\subsection{Simulations} \label{sec:simulations}
Here we consider five numerical studies designed to demonstrate the effectiveness of Backtracking with the Lasso and also highlight some of the drawbacks of using the Lasso with main effects only, when interactions are present. In each of the five scenarios, we generated 200 design matrices with $n=250$ observations and $p=1000$ covariates. The rows of the design matrices were sampled independently from $N_p (\mb 0, \mbb\Sigma)$ distributions. The covariance matrix $\mbb\Sigma$ was chosen to be the identity in all scenarios except scenario 2, where
\[
 \Sigma_{ij} = 0.75 ^{-||i-j| - p/2| + p/2}.
\]
Thus in this case, the correlation between the components decays exponentially with the distance between them in $\mathbb{Z} / p \mathbb{Z}$.

We created the responses according to the linear model with interactions and set the intercept to 0:
\begin{equation}
 \mb Y = \mb X_{S^*} \mbb\beta_{S^*} ^* + \mbb\varepsilon, \quad \varepsilon_i \iid N(0, \sigma^2).
\end{equation}
The error variance $\sigma^2$ was chosen to achieve a signal-to-noise ratio (SNR) of either 2 or 3.
The set of main effects in $S^*$, $S^* _1$, was $1, \ldots, 10$. The subset of variables involved in interactions was $1,\ldots,6$. The set of first-order interactions in $S^*$ chosen in the different scenarios, $S^* _2$, is displayed in Table~\ref{tab:sim_settings}, and we took $S^* = S^* _1 \cup S^* _2$ so $S^*$ contained no higher order interactions. In each simulation run, $\mbb \beta^* _{S^* _1}$ was fixed and given by
\[
 (2, -1.5, 1.25, -1, 1, -1, 1, 1, 1, 1)^T .
\]
Each component of $\mbb \beta^* _{S^* _2}$ was chosen to be $\sqrt{\norms{\mbb \beta^* _{S^* _1}}_2 ^2 / \abs{S^* _1}}$. Thus the squared magnitude of the interactions was equal to average of the squared magnitudes of the main effects.

\begin{table}
\centering
\begin{tabular}{cc}
\hline
Scenario & $S^* _2$\\
\hline
1 & $\emptyset$\\
2 & $\emptyset$\\
3 & $\{\{1,2\}, \{3,4\}, \{5,6\}\}$\\
4 & $\{\{1,2\}, \{1,3\},\ldots,\{1,6\}\}$\\
5 & $\mathcal{I}(1, 2, 3) \cup \mathcal{I}(4, 5, 6)$\\
\hline
\end{tabular}
\caption{Simulation settings.}
\label{tab:sim_settings}
\end{table}

In all of the scenarios, we applied four methods: the Lasso using only the main effects; iterated Lasso fits; marginal screening for interactions followed by the Lasso; and the Lasso with Backtracking.
Note that due to the size of $p$ in these examples, most of the methods for finding interactions in lower-dimensional data discussed in Section~\ref{sec:Intro}, are computationally impractical here.

For the iterated Lasso fits, we repeated the following process. Given a design matrix, first fit the Lasso. Then apply 5-fold cross-validation to give a $\lambda$ value and associated active set. Finally add all interactions between variables in this active set to the design matrix, ready for the next iteration. For computational feasibility, the procedure was terminated when the number of variables in the design matrix exceeded $p + 250 \times 249 / 2$.

With the marginal screening approach, we selected the $2p$ interactions with the largest marginal correlation with the response and added them to the design matrix. Then a regular Lasso was performed on the augmented matrix of predictors.

Additionally, in scenarios 3--5, we applied the Lasso with all main effects and only the true interactions. This theoretical Oracle approach provided a gold standard against which to test the performance of Backtracking.

We used the procedures mentioned to yield active sets on which we applied OLS to give a final estimator. To select the tuning parameters of the methods we used cross-validation randomly selection 5 folds but repeating this a total of 5 times to reduce the variance of the cross-validation scores. Thus for each $\lambda$ value we obtained an estimate of the expected prediction error that was an average over the observed prediction errors on 25 (overlapping) validation sets of size $n/5 = 50$. Note that for both Backtracking and the iterated Lasso, this form of cross-validation chose not just a $\lambda$ value but also a path rank. When using Backtracking, the size of the active set was restricted to 50 and the size of $C_k$ to $p + 50 \times 49 
/ 2 = 1225$, so $T$ was at most 50.

In scenarios 1 and 2, the results of the methods were almost indistinguishable except that the screening approach performed far worse in scenario 1 where it tended to select several false interactions which in turn hampered the selection of main effects and resulted in a much larger prediction error. 

The results of scenarios 3--5, where the signal contains interactions, are more interesting and given in Table~\ref{tab:sim_results}. For each scenario, method and SNR level, we report 5 statistics. `$L_2$-sq' is the expected squared distance of the signal $\mb{f}^*$ and our prediction functions $\hat{\mb f}$ based on training data $(\mb Y_\mathrm{train}, \mb X_\mathrm{train})$, evaluated at a random independent test observation $\mb x_{\mathrm{new}}$:
\[
 \E_{\mb x_{\mathrm{new}}, \mb Y_{\mathrm{train}}, \mb X_{\mathrm{train}}}(\mb{f}^* \{\mb x_{\mathrm{new}}) - \hat{\mb f} (\mb x_{\mathrm{new}}; \mb Y_{\mathrm{train}}, \mb X_{\mathrm{train}})\}^2.
\]
`FP Main' and `FP Inter' are the numbers of noise main effects and noise interaction terms respectively, incorrectly included in the final active set. `FN Main' and `FN Inter' are the numbers of true main effects and interaction terms respectively, incorrectly excluded from the final active set.

For all the statistics presented, lower numbers are to be preferred. However, the higher number of false selections incurred by both Backtracking and the Oracle procedure compared to using the main effects only or iterated Lasso fits, is due to the model selection criterion being the expected prediction error. It should not be taken as an indication that the latter procedures are  performing better in these cases.

Backtracking performs best out of the four methods compared here. Note that under all of the settings, iterated Lasso fits incorrectly selects more interaction terms than Backtracking. We see that the more careful way in which Backtracking adds candidate interactions, helps here. Unsurprisingly, fitting the Lasso on just the main effects performs rather poorly in terms of predictive performance. However, it also fails to select important main effects; Backtracking and Iterates have much lower main effect false negatives. The screening approach appears to perform worst here. This is partly because it is not making use of the fact that in all of the examples considered, the main effects involved in interactions are also informative. However, its poor performance is also due the fact that too many false interactions are added to the design matrix after the screening stage. Reducing the number added may help to improve results, but choosing the number of interactions to include via cross-validation, for example, would be computationally costly, unless a Backtracking-type strategy of the sort introduced in this paper were used. We also note that for very large $p$, marginal screening of interactions would be infeasible due to the quadratic scaling in complexity with $p$.

\begin{table}[ht]
\footnotesize
\centering
\begin{tabular}{|c | c | R{0.9cm} R{0.9cm} R{0.9cm} R{0.9cm} R{0.9cm} | R{0.9cm} R{0.9cm} R{0.9cm} R{0.9cm} R{0.9cm}|}
\hline
& & \multicolumn{5}{c|}{$\mathrm{SNR}=2$} & \multicolumn{5}{c|}{$\mathrm{SNR}=3$} \\
\cline{3-12}
Scenario & Statistic & Main & Iterate & Screening & Back-tracking & Oracle & Main & Iterate & Screening & Back-tracking & Oracle \\ 
  \hline
\multirow{5}{*}{3}& $L_2$-sq & 6.95 & 1.40 & 12.87 & 1.21 & 0.82 & 5.67 & 0.27 & 9.24 & 0.27 & 0.18 \\
& FP Main & 3.18 & 2.43 & 0.01 & 2.89 & 3.19 & 1.91 & 0.65 & 0.00 & 0.73 & 0.79 \\
& FN Main & 1.26 & 0.38 & 7.24 & 0.24 & 0.14 & 0.52 & 0.05 & 5.14 & 0.04 & 0.01 \\ 
& FP Inter & 0.00 & 0.93 & 11.05 & 0.45 & 0.00 & 0.00 & 0.27 & 13.57 & 0.12 & 0.00 \\ 
& FN Inter & 3.00 & 0.18 & 2.06 & 0.14 & 0.01 & 3.00 & 0.03 & 1.39 & 0.04 & 0.00 \\ 
\hline
\multirow{5}{*}{4}& $L_2$-sq & 12.05 & 3.25 & 17.68 & 2.72 & 1.68 & 10.44 & 0.63 & 15.19 & 0.41 & 0.31 \\ 
& FP Main & 2.22 & 3.88 & 0.02 & 5.34 & 7.05 & 2.58 & 1.80 & 0.04 & 2.08 & 2.21 \\ 
 & FN Main & 3.12 & 0.90 & 8.13 & 0.61 & 0.26 & 1.77 & 0.11 & 6.94 & 0.04 & 0.00 \\
 & FP Inter & 0.00 & 2.50 & 12.33 & 0.77 & 0.00 & 0.00 & 1.77 & 17.90 & 0.28 & 0.00 \\ 
 & FN Inter & 5.00 & 0.66 & 4.07 & 0.51 & 0.08 & 5.00 & 0.08 & 3.39 & 0.03 & 0.00 \\ 
\hline
 \multirow{5}{*}{5}& $L_2$-sq & 14.12 & 5.08 & 19.96 & 4.52 & 2.14 & 12.84 & 1.56 & 16.99 & 1.17 & 0.44 \\ 
 & FP Main & 3.07 & 4.75 & 0.02 & 5.87 & 8.57 & 3.43 & 3.01 & 0.05 & 3.23 & 3.77 \\ 
 & FN Main & 3.20 & 1.26 & 8.26 & 0.98 & 0.33 & 2.35 & 0.25 & 7.00 & 0.19 & 0.02 \\ 
 & FP Inter & 0.00 & 3.28 & 17.97 & 0.87 & 0.00 & 0.00 & 3.05 & 21.92 & 0.55 & 0.00 \\
 & FN Inter & 6.00 & 1.34 & 5.00 & 1.23 & 0.14 & 6.00 & 0.39 & 4.14 & 0.30 & 0.00 \\ 
   \hline
\end{tabular}
\caption{Simulation results.}
\label{tab:sim_results}
\end{table}

\subsection{Data Analyses} \label{sec:real_data}
In this section, we look at the performance of Backtracking using two base procedures, the Lasso for the linear model and the Lasso for multinomial regression, on a regression and a classification data set. As competing methods, we consider simply using the base procedures (`Main'), iterated Lasso fits (`Iterated'), Lasso following marginal screening for interactions (`Screening'), Random Forests \citep{breiman01random}, hierNet \citep{Bien_etal2013} and MARS \citep{MARS} (implemented using \citet{mda}). Note that we do not view the latter two methods as competitors of Backtracking, as they are designed for use on lower dimensional data sets than Backtracking is capable of handling. However, it is still interesting to see how the methods perform on data of dimension that is perhaps approaching the upper end of what is easily manageable for methods such as hierNet and MARS, but at the lower end of what one might use Backtracking on.

Below we describe the data sets used which are both from the UCI machine learning repository \citep{UCI}.

\subsubsection{Communities and Crime}
This data set available at \url{http://archive.ics.uci.edu/ml/data sets/Communities+and+Crime+Unnormalized} contains crime statistics for the year 1995 obtained from FBI data, and national census data from 1990, for various towns and communities around the USA. We took violent crimes per capita as our response: violent crime being defined as murder, rape, robbery, or assault. The data set contains two different estimates of the populations of the communities: those from the 1990 census and those from the FBI database in 1995. The latter was used to calculate our desired response using the number of cases of violent crimes. However, in several cases, the FBI population data seemed suspect and we discarded all observations where the maximum of the ratios of the two available population estimates differed by more than 1.25. In addition, we  removed all observations that were missing a response and several variables for which the majority of values were missing.
 This 
resulted in a data set with 
$n = 1903$ observations and $p=101$ covariates. The response was scaled to have empirical variance 1.

\subsubsection{ISOLET}
This data set consists of $p=617$ features based on the speech waveforms generated from utterances of each letter of the English alphabet. The task is to learn a classifier which can determine the letter spoken based on these features. The data set is available from \url{http://archive.ics.uci.edu/ml/datasets/ISOLET}; see \citet{ISOLET} for more background on the data. We consider classification on the notoriously challenging E-set consisting of the letters `B', `C', `D', `E', `G', `P', `T', `V' and `Z' (pronounced `zee'). As there were 150 subjects and each spoke each letter twice, we have $n = 2700$ observations spread equally among 9 classes.
The dimension of this data is such that MARS and hierNet could not be applied.

\subsection{Methods and Results}
For the Communities and crime data set, we used the Lasso for the linear model as the base regression procedure for Backtracking and Iterates. Since the per capita violent crime response was always non-negative, the positive part of the fitted values was taken. For Main, Backtracking, Iterates, Screening and hierNet, we employed 5-fold cross-validation with squared error loss to select tuning parameters. For MARS we used the default settings for pruning the final fits using generalised cross-validation. With Random Forests, we used the default settings on both data sets.
For the classification example, penalised multinomial regression was used (see Section~\ref{sec:multinomial}) as the base procedure for Backtracking and Iterates, and the deviance was used as the loss function for 5-fold cross-validation. In all of the methods except Random Forests, we only included first-order interactions. When using Backtracking, we also restricted the size of $C_k$ to $p + 50 \times 49 / 2 = p + 1225$.

To evaluate the procedures, we randomly selected 2/3 for training and the remaining 1/3 was used for testing. This was repeated 200 times for each of the data sets. Note that we have specifically chosen data sets with $n$ large as well as $p$ large. This is to ensure that comparisons between the performances of the methods can be made with more accuracy. For the regression example, out-of-sample squared prediction error was used as a measure of error; for the classification example, we used out-of-sample misclassification error with 0--1 loss. The results are given in Table~\ref{tab:real_data}.

Random Forests has the lowest prediction error on the regression data set, with Backtracking not far behind, whilst Backtracking wins in the classification task, and in fact achieves strictly lower misclassification error than all the other methods on 90\% of all test samples. Note that a direct comparison with Random Forests is perhaps unfair, as the latter is a black-box procedure whereas Backtracking is aiming for a more interpretable model.

MARS performs very poorly indeed on the regression data set. The enormous prediction error is caused by the fact that whenever observations corresponding to either New York or Los Angeles were in the test set, MARS predicted their responses to be far larger than they were. However, even with these observations removed, the instability of MARS meant that it was unable to give much better predictions than an intercept-only model.

HierNet performs well on this data set, though it is worth noting that we had to scale the interactions to have the same $\ell_2$-norm as the main effects to get such good results (the default scaling produced error rates worse than that of an intercept-only model). Backtracking does better here. One reason for this is that the because the main effects are reasonably strong in this case, a low amount of penalisation works well. However, because with hierNet, the penalty on the interactions is coupled with the penalty on the main effects, the final model tended to include close to two hundred interaction terms. The Screening approach similarly suffers from including too many interactions and performs only a little better than a main effects only fit.

The way that Backtracking creates several solution paths with varying numbers of interaction terms means that it is possible to fit main effects and a few interactions using a low penalty without this low penalisation opening the door to many other interaction terms. The iterated Lasso approach also has this advantage, but as the number of interactions are increased in discrete stages, it can miss a candidate set with the right number of interactions that may be picked up by the more continuous model building process used by Backtracking. This occurs in a rather extreme way with the ISOLET data set where, since in the first stage of the iterated Lasso, cross-validation selected far too many variables ($>250$), the second and subsequent steps could not be performed. This is why the results are identical to using the main effects alone.

\begin{table}
 \centering
  \begin{tabular}{|c| L{4.5cm} | L{4.5cm}|}
   \hline
& \multicolumn{2}{c|}{Error} \\
\cline{2-3}
Method & \multicolumn{1}{c|}{Communities and crime} & \multicolumn{1}{c|}{ISOLET} \\
\hline
Main & {\color{white}00000}0.414 \;$(6.5\times 10^{-3})$ & {\color{white}0000}0.0641 \;$(4.7\times 10^{-4})$ \\
Iterate & {\color{white}00000}0.384 \;$(5.9\times 10^{-3}$ & {\color{white}0000}0.0641 \;$(4.7\times 10^{-4})$ \\
Screening & {\color{white}00000}0.390 \;$(7.8\times 10^{-3})$ & \multicolumn{1}{c|}{-} \\
Backtracking & {\color{white}00000}0.365 \;$(3.7\times 10^{-3})$ & {\color{white}0000}0.0563 \;$(4.5\times 10^{-4})$ \\
Random Forest & {\color{white}00000}0.356 \;$(2.4\times 10^{-3})$ & {\color{white}0000}0.0837 \;$(6.0\times 10^{-4})$ \\
hierNet & {\color{white}00000}0.373 \;$(4.7\times 10^{-3})$ &   \multicolumn{1}{c|}{-} \\
MARS & {\color{white}00}5580.586 \;$(3.1 \times 10^3)$ & \multicolumn{1}{c|}{-} \\
\hline
  \end{tabular}
\
\caption{Real data analyses results. Average error rates over 200 training--testing splits are given, with standard deviations of the results divided by $\sqrt{200}$ in parentheses.}
\label{tab:real_data}
\end{table}

\section{Theoretical Properties} \label{sec:theory}
Our goal in this section is to understand under what circumstances Backtracking with the Lasso can arrive at a set of candidates, $C^*$, that contains all of the true interactions, and only a few false interactions.  On the event on which this occurs, we can then apply
many of the existing results on the Lasso, to show that the solution path $\hat{\mbb\beta}(\lambda, C^*)$ has certain properties. As an example, in Section~\ref{sec:fixed} we give sufficient conditions for the existence of a $\lambda^*$ such that $\{v: \hat{\beta}_v (\lambda^*, C^*) \neq 0\}$ equals the true set of variables.

We work with the normal linear model with interactions,
\begin{equation} \label{eq:mod}
 \mb Y = \mu^* \textbf{1} + \mb X_{S^*} \mbb\beta^*_{S^*} + \mbb\varepsilon,
\end{equation}
where $\varepsilon_i \iid N(0, \sigma^2)$, and to ensure identifiability, $\mb X_{S^*}$ has full column rank. We will assume that $S^* = S^* _1 \cup S^* _2$, where $S^* _1$ and $S^* _2$ are main effects and two-way interactions respectively. Let the interacting main effects be $I^*$; formally, $I^*$ is the smallest set of main effects such that $\mathcal{I}(I^*) \supseteq S^*_2$.
Assume $I^* \subseteq S^*_1$ so interactions only involve important main effects.
Let $s_l=|S^*_l|$, $l=1,2$ and set $s=s_1+s_2$. Define $C^*=C_1 \cup \mathcal{I}(S_1^*)$. Note that $C^*$ contains $S^*$ but not additional interactions from any variables from $C_1 \setminus S_1^*$.  

Although the Backtracking algorithm was presented for a base path algorithm that computed solutions at only discrete values, for the following results, we need to imagine an idealised algorithm which computes the entire path of solutions.
In addition, we will assume that we only allow first-order interactions in the Backtracking algorithm, and that $T \geq s_1$.

We first consider the special case where the design matrix is derived from a random matrix with i.i.d.\ multivariate normal rows, before describing a result for fixed design. 

\subsection{Random Normal Design}
Let the random matrix $\mb Z$ have independent rows distributed as $N_p(\mb 0, \mbb\Sigma)$. Suppose that $\mb X_{C_1}$, the matrix of main effects, is formed by scaling and centring $\mb Z$.
We consider an asymptotic regime where $\mb X$, $\mb f^*$,  $S^*$, $\sigma^2$ and $p$ can all change as $n \to \infty$, though we will suppress their dependence on $n$ in the notation.
Furthermore, for sets of indices $S, M \subseteq \{1,\ldots,p\}$, let $\mbb\Sigma_{S,M} \in \R^{|S| \times |M|}$ denote the submatrix of $\mbb\Sigma$ formed from those rows and columns of $\mbb\Sigma$ indexed by $S$ and $M$ respectively.
For any positive semi-definite matrix $\mb A$, we will let $c_{\text{min}}(\mb A)$ and $c_{\text{max}}(\mb A)$ denote its minimal and maximal eigenvalues respectively. For sequences $a_n$, $b_n$, by $a_n \succ b_n$ we mean $b_n=o(a_n)$.
We make the following assumptions.
\begin{itemize}
\item[A1.] $c_{\text{min}}(\mbb\Sigma_{S_1^*,S_1^*}) \geq c_*>0$.
\item[A2.] $\sup_{\mbb\tau \in \R^{s_1}:\|\mbb\tau\|_\infty\leq 1} \|\mbb\Sigma_{N,S^*_1}\mbb\Sigma_{S^*_1,S^*_1}^{-1}\mbb\tau\|_\infty \leq \delta < 1$.
\item[A3.] $s_1^4 \log(p)/n \to 0$ and $s_1^8 \log(s_1)^2/n \to 0$.
\item[A4.] \[
\min_{j \in I^*} |\beta^*_j| \succ \frac{s_1(\sigma\sqrt{\log p} + \sqrt{s_1  +\log p})}{\sqrt{n}} + \frac{\sqrt{s_1^3\log(s_1)}}{n^{1/3}}.
\]
\item[A5.] $\|\mbb\beta^*_{S_2^*}\|_2$ is bounded as $n \to \infty$ and $c_{\text{max}}(\mbb\Sigma_{S_1^*,S_1^*}) \leq c^*<\infty$.
\end{itemize}
A1 is a standard assumption in high-dimensional regression and is, for example, implied by the compatibility constant of \citet{BuhlmannGeer2009} being bounded away from zero. A2 is closely related to irrepresentable conditions (see \citet{meinshausen04consistent}, \citet{zhao05model}, \citet{zou05adaptive}, \citet{BuhlmannGeer2009}, \citet{wainwright06sth}), which are used for proving variable selection consistency of the Lasso. Note that although here the signal may contain interactions, our irrepresentable-type condition only involves main effects.

A3 places restrictions on the rates at which $s_1$ and $p$ can increase with $n$. The first condition involving $\log(p)$ is somewhat natural as $s_1^2\log(p)/n \to 0$ would typically be required in order to show $\ell_1$ estimation consistency of $\mbb\beta$ where only $s_1$ main effects are present; here our effective number of variables is $s_1 \leq s \leq s_1^2$. The second condition restricts the size of $s_1$ more stringently but is nevertheless weaker than equivalent conditions in \citet{Hao2014}.

A4 is a minimal signal strength condition. The term involving $\sigma$ is the usual bound on the signal strength required in results on variable selection consistency with the Lasso when there are $s_1^2$ non-zero variables. Due to the presence of interactions, the terms not involving $\sigma$ place additional restrictions on the sizes of non-zero components of $\mbb\beta^*$ even when $\sigma=0$.
A5 ensures that the model is not too heavily misspecified in the initial stages of the algorithm, where we are regressing on only main effects.

The following theorem states that given the assumptions above, with probability tending to 1 we are guaranteed a candidate set will be produced by our algorithm which contains all true interactions and no interactions involving a noise variable.
\begin{thm} \label{thm:random}
Assuming A1--A5, the probability that there exists a $k^*$ such that $C^* \supseteq C_{k^*} \supseteq S^*$ tends to 1 as $n \to \infty$.
\end{thm}

\subsection{Fixed Design} \label{sec:fixed}
The result for a random normal design above is based on a corresponding result for fixed design which we present here.
In order for Backtracking not to add any interactions involving noise variables, to begin with, one pair of interacting signal variables must enter the solution path before any noise variables. 
Other interacting signal variables need only become active after the interaction between this first pair has become active. Thus we need that there is some ordering of the interacting variables where each variable only requires interactions between those variables earlier in the order to be present before it can become active. Variables early on in the order must have the ability to be selected when there is serious model misspecification as few interaction terms will be available for selection. Variables later in the order only need to have the ability to be selected when the model is approximately correct.

Note that a signal variable having a coefficient large in absolute value does not necessarily ensure that it becomes active before any noise variable. Indeed, in our example in Section~\ref{sec:motivation}, variable 5 did not enter the solution path at all when only main effects were present, but had the largest coefficient.
Write $\mb{f}^*$ for $\mb X_{S^*} \mbb\beta_{S^*}$, and for a set $S$ such that $\mb X_S$ has full column rank, define
\[
\mbb \beta ^S := (\mb X_S ^T \mb X_S)^{-1} \mb X_S ^T \mb{f}^*.
\]
Intuitively what should matter are the sizes of the appropriate coefficients of $\mbb \beta ^S$ for  suitable choices of $S$. In the next section, we give a sufficient condition based on $\mbb\beta^S$ for a variable $v \in S$ to enter the solution path before any variable outside $S$.

\subsubsection{The Entry Condition} \label{sec:entry}
Let $\mb P^S = \mb X_S (\mb X_S ^T \mb X_S)^{-1} \mb X_S ^T$ denote orthogonal projection on to the space spanned by the columns of $\mb X_S$.
Further, for any two candidate sets $S, M$ that are sets of subsets of $\{1,\ldots,p\}$,
define
\begin{gather*}
  \hat{\mbb\Sigma}_{S, M} = \tfrac{1}{n} \mb X_S ^T \mb X_M.
\end{gather*}
Now given a set of candidates, $C$, let $v \in S \subset C$ and write $M = C \setminus S$. For $\eta > 0$, we shall say that the $\mathrm{Ent}(v, S, C; \eta)$ condition holds if, $\mb X_S$ has full column rank, and the following holds,
\begin{gather} 
\sup_{\mbb\tau_S \in \R^{|S|} : \norms{\mbb\tau_S}_\infty \leq 1} \norms{\hat{\mbb\Sigma}_{M, S} \hat{\mbb\Sigma}_{S, S} ^{-1} \mbb\tau_S}_\infty < 1, \label{eq:irrep} \\
 |\beta^S _v| > \max_{u \in M} \left\{ \frac{\tfrac{1}{n}\abs{\mb X_u ^T (\mb I - \mb P^S) \mb{f}^*} + 2\eta}{1 - \norms{\hat{\mbb\Sigma}_{S, S} ^{-1} \hat{\mbb\Sigma}_{S, \{u\}}}_1} + \eta \right\} \norms{(\hat{\mbb\Sigma}_{S, S} ^{-1})_v}_1 .\label{eq:ent}
\end{gather}
In Lemma~\ref{lem:entry} given in the appendix, we show that this condition is sufficient for variable $v$ to enter the active set before any variable in $M$, when the set of candidates is $C$ and $\norms{\mb X_C ^T \mbb\varepsilon}_\infty \leq \eta$. In addition, we show that $v$ will remain in the active set at least until some variable from $M$ enters the active set.

The second part of the entry condition \eqref{eq:ent} asserts that coefficient $v$ of the regression of $\mb{f}^*$ on $\mb X_S$ must exceed a certain quantity that we now examine in more detail. The $\tfrac{1}{n} \mb X_u ^T (\mb I - \mb P^S) \mb{f}^*$ term is the sample covariance between $\mb X_u$, which is one of the columns of $\mb X_M$, and the residual from regressing $\mb{f}^*$ on $\mb X_S$. Note that the more of $S^*$ that $S$ contains, the closer this will be to 0.

To understand the $\norms{(\hat{\mbb\Sigma}_{S, S} ^{-1})_v}_1$ term, without loss of generality take $v$ as $\{1\}$ and write $\mb b = \hat{\mbb\Sigma}_{S \setminus \{v\}, \{v\}}$ and $\mb D = \hat{\mbb\Sigma}_{S \setminus \{v\}, S \setminus \{v\}}$. For any square matrix $\hat{\mbb\Sigma}$, let $c_{\mathrm{min}} (\hat{\mbb\Sigma})$ denote its minimal eigenvalue.
Using the formula for the inverse of a block matrix and writing $s$ for $|S|$, we have
\begin{align*}
 \norms{(\hat{\mbb\Sigma}_{S, S} ^{-1})_v}_1 & = \norm{\begin{pmatrix} 1 + \mb b^T(\mb D - \mb b \mb b^T)^{-1} \mb b \\ -(\mb D - \mb b \mb b^T)^{-1} \mb b \end{pmatrix}}_1 \\
& \leq 1 + \frac{\norms{\mb b}^2 _2 + \sqrt{s-1}\norms{\mb b}_2}{c_\mathrm{min} (\hat{\mbb\Sigma}_{S, S})}.
\end{align*}
In the final line we have used the Cauchy--Schwarz inequality and the fact that if $\mb w^*$ is a unit eigenvector of $\mb D - \mb b \mb b^T$ with minimal eigenvalue, then
\[
 c_\mathrm{min} (\mb D - \mb b \mb b^T) = \norm{ \hat{\mbb\Sigma}_{S, S} \begin{pmatrix}
						-\mb b^T \mb w^* \\ \mb w^* 
						\end{pmatrix}}_2  \geq c_\mathrm{min} (\hat{\mbb\Sigma}_{S, S})\sqrt{1 + |\mb b^T \mb w^*|^2} \geq c_\mathrm{min}(\hat{\mbb\Sigma}_{S, S}).
\]
Thus when variable $v$ is not too correlated with the other variables in $S$, and so $\norms{\mb b}_2$ is small, $\norms{(\hat{\mbb\Sigma}_{S, S} ^{-1})_v}_1$ will not be too large. Even when this is not the case, we still have the bound
\[
 \norms{(\hat{\mbb\Sigma}_{S, S} ^{-1})_v}_1 \leq \frac{\sqrt{|S|}}{c_\mathrm{min} (\hat{\mbb\Sigma}_{S, S})}.
\]

Turning now to the denominator, $\norms{\hat{\mbb\Sigma}_{S, S} ^{-1} \hat{\mbb\Sigma}_{S, \{u\}}}_1$ is the $\ell_1$-norm of the coefficient of regression of $\mb X_u$ on $\mb X_S$, and the maximum of this quantity over $u \in M$ gives the left-hand side of \eqref{eq:irrep}. Thus when $u$ is highly correlated with many of the variables in $S$, $\norms{\hat{\mbb\Sigma}_{S, S} ^{-1} \hat{\mbb\Sigma}_{S, \{u\}}}_1$ will be large. On the other hand, in this case one would expect $\norms{(\mb I - \mb P^S) \mb X_u}_2$ to be small, and so to some extent the numerator and denominator compensate for each other.

\subsubsection{Statement of Results}
Without loss of generality assume $I^* = \{1, \ldots, |I^*|\}$. Also let $\mathcal{J}=\{\mathcal{I}(A):A\subseteq S_1^*\}$. Our formal assumption corresponding to the discussion at the beginning of Section~\ref{sec:theory} is the following.
\begin{quote}
\textbf{The entry order condition.} There is some ordering of the variables in $I^*$, which without loss of generality we take to simply be $1, \ldots, |I^*|$, such that for each $j \in I^*$, we have,
\begin{gather*}
 \text{For all } A \in \mathcal{J} \text{ with }\mathcal{I}(1, \ldots, j-1) \subseteq A \subseteq \mathcal{I}(S_1 ^*) \nonumber \\
\mathrm{Ent}(j, S^*_1 \cup B, C_1 \cup A; \eta) \,\, \text{ holds for some } A \cap S_2^* \subseteq B \subseteq A.
\end{gather*}
Here
\begin{gather*}
 \eta = \eta(t; n, p, s_1, \sigma) = \sigma \sqrt{\frac{t^2 + 2 \log(p + s_1^2)}{n}}.
\end{gather*}
\end{quote}
First we discuss the implications for variable $1$. The condition ensures that whenever the candidate set is enlarged from $C_1$ to also include any set of interactions built from $S_1^*$, variable $1$ enters the active set before any variable outside $\mathcal{I}(S_1^*)$, and moreover, it remains in the active set at least until a variable outside $\mathcal{I}(S_1^*)$ enters.

For $j >2$, we see that the enlarged candidate sets for which we require the entry conditions to hold, are fewer in number. Variable $|I^*|$ only requires the entry condition to hold for candidate sets that at least include $\mathcal{I}(1,\ldots, |I^*|-1)$ and thus include almost all of $S^*$. What this means is that we require some `strong' interacting variables, for which when $\mb{f}^*$ is regressed onto a variety of sets of variables containing them (some of which contain only a few of the true interaction variables), always have large coefficients. Given the existence of such 
strong variables, other interacting variables need only have large coefficients when $\mb{f}^*$ is regressed onto sets containing them that also include many true interaction terms. Note that the equivalent result for the success of the strategy that simply adds interactions between selected main effects would essentially require all main effect involved in interactions to satisfy the conditions imposed on the variables $1$ and $2$ here.
Going back to the example in Section~\ref{sec:motivation}, variable 5 has $|\beta_{5} ^S| \approx 0$ for all $S \subseteq \{1, \ldots, 6\}$, but $|\beta_{5} ^S| > 0$ once $\{1,2\} \in S$ or $\{3,4\} \in S$. 



\begin{thm} \label{thm:main}
Assume the entry order condition holds. With probability at least $1 - \exp(-t^2/2)$, there exists a $k^*$ such that $C^* \supseteq C_{k^*} \supseteq S^*$.
\end{thm}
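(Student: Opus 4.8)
The plan is to condition on the event $\Omega := \{\norms{\mb X_{\tilde C^*}^{T}\mbb\varepsilon}_\infty \le n\eta\}$ and to argue that \emph{on} $\Omega$ the idealised Backtracking run behaves deterministically as required: every candidate set it produces lies in $\tilde C^*$, and the final one already contains $S^*$. The probability bound then comes for free from this reduction: every column $\mb X_v$ of $\mb X_{\tilde C^*}$ has $\norms{\mb X_v}_2=\sqrt n$, so $\mb X_v^{T}\mbb\varepsilon\sim N(0,n\sigma^2)$, while $|\tilde C^*|\le p+\tfrac12|\tilde S^*_1|^2$ (the $p$ main effects together with the $\binom{|\tilde S^*_1|}{2}$ first-order interactions among $\tilde S^*_1$), so a union bound, the Gaussian tail bound $\pr(|N(0,1)|\ge a)\le e^{-a^2/2}$, and the definition of $\eta$ give $\pr(\Omega)\ge 1-\exp(-t^2/2)$. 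As a bookkeeping device I would attach to each candidate set $C\supseteq C_1$ the integer $m(C)$ equal to the largest $j$ such that every first-order interaction among $\{\{1\},\ldots,\{j-1\}\}$ lies in $C$. Then $m(C_1)=2$, $m$ is monotone in $C$, and $m(C)\ge|I^*_1|+1$ forces $\mathcal{I}(\{\{1\},\ldots,\{|I^*_1|\}\})\subseteq C$; since every member of $S^*_2$ is a pair drawn from $I^*_1=\{\{1\},\ldots,\{|I^*_1|\}\}$ this gives $S^*_2\subseteq C$, and as $S^*_1\subseteq C_1$ it is enough to exhibit a step $k^*$ with $C_{k^*}\subseteq\tilde C^*$ and $m(C_{k^*})\ge|I^*_1|+1$.

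First I would prove, by induction on $k$, that $C_k\subseteq\tilde C^*$ for all $k$. Given $C_k\subseteq\tilde C^*$, write $A_k=C_k\setminus C_1\subseteq\mathcal{I}(\tilde S^*_1)$, $S=\tilde S^*_1\cup(A_k\cap S^*_2)$ and $M=C_k\setminus S$; if $m(C_k)\ge|I^*_1|+1$ we are already done, so set $j:=m(C_k)\le|I^*_1|$. Since $\mathcal{I}(\{\{1\},\ldots,\{j'-1\}\})\subseteq A_k\subseteq\mathcal{I}(\tilde S^*_1)$ for every $j'\le j$, the entry order condition yields that $\mathrm{Ent}(\{j'\},S,C_k;\eta)$ holds for these $j'$; and on $\Omega$ we have $\norms{\mb X_{C_k}^{T}\mbb\varepsilon}_\infty\le n\eta$, so Lemma~\ref{lem:entry} tells us each of $\{1\},\ldots,\{j\}$ enters the active set of the Lasso path with design matrix $\mb X_{C_k}$ strictly before any variable of $M$, and remains active until some variable of $M$ enters. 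Consequently, on the nonempty interval of $\lambda$ lying just above the value at which the first $M$-variable would become active, the active set $\mathcal{A}$ contains $\{1\},\ldots,\{j\}$ and is contained in $S$; and because $m(C_k)=j$ there is a pair $\{a,j\}$ with $a<j$ absent from $C_k$, so on that interval $\mathcal{I}(\mathcal{A})\ni\{a,j\}\notin C_k$. Hence the Backtracking branch from $C_k$ must be triggered at a value of $\lambda$ at which no variable of $M$ is yet active: at that branch point the active set lies in $S$, its singletons lie in $\tilde S^*_1$, and therefore $\mathcal{I}(\mathcal{A})\subseteq\mathcal{I}(\tilde S^*_1)$, so $C_{k+1}=C_k\cup\mathcal{I}(\mathcal{A})\subseteq\tilde C^*$, completing the induction.

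With containment established, $T<\infty$ because the $C_k$ strictly increase and all lie in $\tilde C^*$. To finish I would show $m(C_T)\ge|I^*_1|+1$: if instead $j:=m(C_T)\le|I^*_1|$, the same application of the entry order condition and Lemma~\ref{lem:entry} to $C_T$ produces a $\lambda$ at which the active set $\mathcal{A}$ contains $\{1\},\ldots,\{j\}$ while $\mathcal{I}(\mathcal{A})$ contains a pair among $\{1,\ldots,j\}$ missing from $C_T$, i.e.\ $\mathcal{I}(\mathcal{A})\not\subseteq C_T$ --- but then the algorithm would have branched from $C_T$, contradicting the maximality of $T$. Hence $m(C_T)\ge|I^*_1|+1$, and taking $k^*$ to be the least index with $m(C_{k^*})\ge|I^*_1|+1$ gives $\tilde C^*\supseteq C_{k^*}\supseteq S^*$.

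The step I expect to be the main obstacle is precisely the assertion that each branch is triggered before any variable of $M$ becomes active: this is where \emph{both} halves of Lemma~\ref{lem:entry} are essential --- ``$\{j'\}$ enters before $M$'' together with ``$\{j'\}$ persists until an $M$-variable enters'' are exactly what produce a nonempty $\lambda$-interval on which $\{1\},\ldots,\{j\}$ are simultaneously active and $M$ is untouched, forcing the branch condition to be met strictly earlier than $M$. A secondary technicality is that Lemma~\ref{lem:entry} refers to the exact Lasso path $\hat{\mbb\beta}(\cdot,C_k)$, whereas Backtracking reports the approximate path $\tilde{\mbb\beta}(\cdot,C_k)$; one must check that the entry events above fall in the range $[0,\lambda^{\mathrm{start}}_k]$ on which the two coincide, which follows from the defining properties of the idealised algorithm (in particular $\mathcal{A}(\tilde{\mbb\beta}(\lambda^{\mathrm{start}}_k,C_k))=\mathcal{A}(\tilde{\mbb\beta}(\lambda^{\mathrm{start}}_k,C_{k-1}))$). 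The concentration bound and the monotonicity/finiteness argument via $m(\cdot)$ are then routine.
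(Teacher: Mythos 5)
Your skeleton is the paper's own argument: work on the event $\{\tfrac{1}{n}\norms{\mb X_{\tilde{C}^*}^T\mbb\varepsilon}_\infty\le\eta\}$ (same union bound for the probability), and show by induction that every candidate set stays inside $\tilde{C}^*$ because, by the entry order condition and Lemma~\ref{lem:entry}, the variables $\{1\},\ldots,\{j\}$ (your $m(C_k)$ is the paper's $j^\mathrm{max}$) are simultaneously active on a nonempty $\lambda$-interval above the entry level of any variable outside $\tilde{S}^*$, so the branch is triggered while the active set lies in $\tilde{S}^*$ and only interactions from $\mathcal{I}(\tilde{S}^*_1)$ get added; iterating until all pairs from $I^*_1$, hence $S^*_2$, are present. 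That part is sound.

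The genuine gap is the step you label a ``secondary technicality''. For $k\ge 2$ the algorithm only inspects the exact path $\hat{\mbb\beta}(\cdot,C_k)$ for $\lambda\le\lambda^{\mathrm{start}}_k$, and nothing in the defining properties of the idealised algorithm by itself guarantees that your entry/branching events fall in that range: if $\lambda^{\mathrm{start}}_k$ were below your $\lambda^{\mathrm{all}}$ it could even lie below $\lambda^{\mathrm{ent}}$, and then the first trigger along the portion the algorithm actually sees could occur with variables outside $\tilde{S}^*$ active, adding interactions outside $\tilde{C}^*$ and breaking the induction. The paper closes this by a contradiction argument showing $\lambda^{\mathrm{start}}_k>\lambda^{\mathrm{all}}$: if $\lambda^{\mathrm{start}}_k\le\lambda^{\mathrm{all}}$, then $\mathcal{A}(\tilde{\mbb\beta}(\lambda^{\mathrm{start}}_k,C_{k-1}))=\mathcal{A}(\hat{\mbb\beta}(\lambda^{\mathrm{start}}_k,C_k))$ must be contained in $\tilde{S}^*$ and cannot contain all of $\{1\},\ldots,\{j^\mathrm{max}\}$ --- otherwise, by the construction of $C_k$ from the previous active set, either $C_k\nsubseteq\tilde{C}^*$ or $C_k\supseteq\mathcal{I}(\{\{1\},\ldots,\{j^\mathrm{max}\}\})$, contradicting the induction hypothesis or the maximality of $j^\mathrm{max}$ --- while Lemma~\ref{lem:entry}(ii), applied at $\lambda^{\mathrm{start}}_k\le\lambda_j$ with no $M$-variables active, forces all of $\{1\},\ldots,\{j^\mathrm{max}\}$ into the active set. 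You need to supply this (or an equivalent) argument; asserting it ``follows from the defining properties'' is not a proof. A smaller omission: the paper also intersects with the probability-one event $\Omega_1$ excluding an exact fit from fewer than $|S^*|$ columns, which is what rules out the path terminating in a perfect fit before the branch is triggered, so that $k+1\le T$; your finiteness/maximality step should account for that termination possibility too.
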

The following corollary establishes variable selection consistency under some additional conditions. 
\begin{cor} \label{cor:var_sel}
Assume the entry order condition holds. Writing $N = C^* \setminus S^*$, further assume 
\[\norms{\hat{\mbb\Sigma}_{N, S^*} \hat{\mbb\Sigma}_{S^*, S^*} ^{-1} \sgn(\mbb\beta^* _{S^*})}_\infty < 1 ;\]
and that for all $v \in S^*$,
\begin{equation*}
 | \beta^* _v| > \frac{\eta \abs{\sgn(\mbb\beta^* _{S^*})^T (\hat{\mbb\Sigma}_{S^*, S^*} ^{-1} )_v}}{1 - \norms{\hat{\mbb\Sigma}_{N, S^*} \hat{\mbb\Sigma}_{S^*, S^*} ^{-1} \sgn(\mbb\beta^* _{S^*})}_\infty} + \xi,
\end{equation*}
where
\[
 \xi = \xi(t; n, s, \sigma, c_{\mathrm{min}} (\hat{\mbb\Sigma}_{S^*, S^*})) = \sigma \sqrt{\frac{t^2 + 2\log (s)}{n c_{\mathrm{min}}(\hat{\mbb\Sigma}_{S^*, S^*})}}.
\]
Then with probability at least $1 - 3 \exp(-t^2 / 2)$, there exist $k^*$ and $\lambda^*$ such that
\[
 \mathcal{A}(\tilde{\mbb\beta}(\lambda^*, C_{k^*})) = S^*.
\]
\end{cor}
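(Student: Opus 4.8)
The plan is to use Theorem~\ref{thm:main} to lock in a good candidate set and then reduce everything to standard variable-selection theory for the Lasso with the linear model. First I would work on the event $\mathcal{E}_1$ of probability at least $1-\exp(-t^2/2)$ supplied by Theorem~\ref{thm:main}, and fix $k^*$ to be the \emph{smallest} index with $S^*\subseteq C_{k^*}$. Since the $C_k$ are nested and Theorem~\ref{thm:main} produces \emph{some} index sandwiched as $\tilde{C}^*\supseteq C_{k^*}\supseteq S^*$, this smallest choice still satisfies $C_{k^*}\subseteq\tilde{C}^*$; moreover, if $k^*>1$ then $C_{k^*-1}\supseteq C_1\supseteq S^*_1$ but $C_{k^*-1}\not\supseteq S^*_2$, so I may fix a two-way interaction $w\in S^*_2\setminus C_{k^*-1}\subseteq C_{k^*}\setminus C_{k^*-1}$. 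Writing $N_0=C_{k^*}\setminus S^*\subseteq\tilde{C}^*\setminus S^*=N$, monotonicity of $\norms{\cdot}_\infty$ under shrinking the index set shows that the two displayed hypotheses of the corollary (the irrepresentable-type inequality and the $\beta$-min inequality, both stated for $N$) remain valid with $N$ replaced by $N_0$, hence hold for the design $\mb X_{C_{k^*}}$.

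Next I would run a primal--dual witness argument (as in \citet{wainwright06sth}, \citet{zhao05model}) for the linear model \eqref{eq:mod} viewed with design matrix $\mb X_{C_{k^*}}$ and true support $S^*$. On an event $\mathcal{E}_2$ of probability at least $1-2\exp(-t^2/2)$, controlled by two Gaussian maximal inequalities (one over the $\leq|\tilde{C}^*|$ columns indexed by $N_0$, producing the $\log(p+\tfrac12|\tilde{S}^*_1|^2)$ scaling carried by $\eta$; one over the $|S^*|$ columns indexed by $S^*$, producing $\xi$, whose $c_{\mathrm{min}}(\mbb\Sigma_{S^*,S^*})$ factor is exactly what the inverse $\mbb\Sigma_{S^*,S^*}^{-1}$ contributes), the irrepresentable-type inequality makes the witness dual feasible and the $\beta$-min inequality makes every coordinate of $S^*$ survive, yielding a value $\lambda^*$ (of order $\eta$) with $\mathcal{A}(\hat{\mbb\beta}(\lambda^*,C_{k^*}))=S^*$ and $\sgn(\hat{\mbb\beta}(\lambda^*,C_{k^*}))_{S^*}=\sgn(\mbb\beta^*_{S^*})$. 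The work here is just matching the generic threshold of the standard result to the precise $\eta$- and $\xi$-based expressions assumed in the corollary.

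The one genuinely Backtracking-specific step is to check that $\lambda^*$ is compatible with the approximation, i.e.\ that $\lambda^*\leq\lambda^{\mathrm{start}}_{k^*}$, so that $\tilde{\mbb\beta}(\lambda^*,C_{k^*})=\hat{\mbb\beta}(\lambda^*,C_{k^*})$ and hence $\mathcal{A}(\tilde{\mbb\beta}(\lambda^*,C_{k^*}))=S^*$. If $k^*=1$ this is immediate since $\lambda^{\mathrm{start}}_1=\infty$. If $k^*>1$, recall from the discussion around \eqref{eq:KKT_check_Lasso} that for every $\lambda\geq\lambda^{\mathrm{start}}_{k^*}$ one has $\hat{\mbb\beta}_{C_{k^*}\setminus C_{k^*-1}}(\lambda,C_{k^*})=\mb 0$; equivalently, if any variable of $C_{k^*}\setminus C_{k^*-1}$ is active in $\hat{\mbb\beta}(\lambda,C_{k^*})$ then $\lambda<\lambda^{\mathrm{start}}_{k^*}$. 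Applying this with $w\in C_{k^*}\setminus C_{k^*-1}$, which lies in $S^*=\mathcal{A}(\hat{\mbb\beta}(\lambda^*,C_{k^*}))$ and is therefore active at $\lambda^*$, gives $\lambda^*<\lambda^{\mathrm{start}}_{k^*}$. A union bound over $\mathcal{E}_1$ and $\mathcal{E}_2$ then gives probability at least $1-3\exp(-t^2/2)$, completing the argument.

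I expect the main difficulty to be not any single deep step but the bookkeeping in the middle paragraph: extracting from an off-the-shelf Lasso selection result a statement whose noise and $\beta$-min thresholds coincide \emph{exactly} with the expressions in the corollary --- in particular getting the residual projection $\mb I-\mb P^{S^*}$, and hence the $c_{\mathrm{min}}(\mbb\Sigma_{S^*,S^*})$ factor, into the right place --- which may well require redoing the primal--dual witness computation by hand rather than quoting a theorem verbatim. The compatibility step, by contrast, is conceptually the crux but is short once one notices that $w$ is forced to be active at the recovery value $\lambda^*$.
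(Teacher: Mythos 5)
Your overall route is essentially the paper's: take $k^*$ from Theorem~\ref{thm:main}, replace $N$ by $N_0=C_{k^*}\setminus S^*$ (the paper's $M$), and run the explicit primal--dual witness construction on $\mb X_{C_{k^*}}$, controlling the noise through two extra Gaussian events (projected noise $\tfrac1n\norms{\mb X_{N_0}^T(\mb I-\mb P^{S^*})\mbb\varepsilon}_\infty\leq\eta$ and $\norms{\mbb\Sigma_{S^*,S^*}^{-1}\tfrac1n\mb X_{S^*}^T\mbb\varepsilon}_\infty\leq\xi$) and a union bound giving $1-3\exp(-t^2/2)$; this is exactly how the paper proceeds, following the proof of Lemma~\ref{lem:entry} with $S=S^*$.

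The gap is in your compatibility step. You invoke the property that $\hat{\mbb\beta}_{C_{k^*}\setminus C_{k^*-1}}(\lambda,C_{k^*})=\mb 0$ for \emph{every} $\lambda\geq\lambda^{\mathrm{start}}_{k^*}$. That property belongs to the naive definition of $\lambda^{\mathrm{start}}$ via \eqref{eq:KKT_check_Lasso} (Algorithm~\ref{alg:imp1}); the theory in Section~\ref{sec:theory} is stated for the bisection-based variant, and the only assumptions made on the idealised paths are that $\tilde{\mbb\beta}(\lambda,C_k)=\hat{\mbb\beta}(\lambda,C_k)$ for $\lambda\leq\lambda^{\mathrm{start}}_k$ and that $\mathcal{A}(\tilde{\mbb\beta}(\lambda^{\mathrm{start}}_k,C_k))=\mathcal{A}(\tilde{\mbb\beta}(\lambda^{\mathrm{start}}_k,C_{k-1}))$. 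Indeed Section~\ref{sec:final_alg} explicitly warns that for $\lambda>\lambda^{\mathrm{start}}_{k}$ variables in $C_{k}\setminus C_{k-1}$ may have entered $\hat{\mbb\beta}(\lambda,C_{k})$ and left again at or before $\lambda^{\mathrm{start}}_{k}$, so the fact that your interaction $w$ is active at $\lambda^*$ does not force $\lambda^*<\lambda^{\mathrm{start}}_{k^*}$, and the reduction $\tilde{\mbb\beta}(\lambda^*,C_{k^*})=\hat{\mbb\beta}(\lambda^*,C_{k^*})$ is unjustified as written. The paper closes this case differently: if the witness value $\lambda$ exceeds $\lambda^{\mathrm{start}}_{k^*}$, it uses that $\hat{\mbb\beta}_M(\lambda^{\mathrm{start}}_{k^*},C_{k^*})=\mb 0$ is known directly from the algorithm's defining properties (together with the proof of Theorem~\ref{thm:main}), and then the restricted-solution formula gives $\sgn(\hat{\mbb\beta}_{S^*}(\lambda^{\mathrm{start}}_{k^*},C_{k^*}))=\sgn(\mbb\beta^*_{S^*})$ because $\lambda^{\mathrm{start}}_{k^*}<\lambda$ still lies below the $\beta$-min threshold; hence one may take $\lambda^*=\min\{\lambda,\lambda^{\mathrm{start}}_{k^*}\}$. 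Your argument would be fine for the naive algorithm, but to cover the variant of Backtracking that the corollary actually concerns you need this fallback (or an equivalent handling of the case $\lambda>\lambda^{\mathrm{start}}_{k^*}$).
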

Note that if we were to simply apply the Lasso to the set of candidates $C^\mathrm{all} := C_1 \cup \mathcal{I}(C_1)$ (i.e. all possible main effects and their first-order interactions), we would require an irrepresentable condition of the form
\[
 \norms{\hat{\mbb\Sigma}_{N^\mathrm{all}, S^*} \hat{\mbb\Sigma}_{S^*, S^*} ^{-1} \sgn(\mbb\beta^* _{S^*})}_\infty < 1,
\]
where $N^\mathrm{all} = C^\mathrm{all} \setminus S^*$. Thus we would need $O(p^2)$ inequalities to hold, rather than
 our $O(p)$. Of course, we had to introduce many additional assumptions to reach this stage and no set of assumptions is uniformly stronger or weaker than the other. However, our proposed method is computationally feasible.
 
\section{Discussion} \label{sec:discussion}
 While several methods now exist for fitting interactions in moderate-dimensional situations where $p$ is in the order of hundreds, the problem of fitting interactions when the data is of truly high dimension has received less attention.
 
Typically, the search for interactions must be restricted by first fitting a model using only main effects, and then including interactions between those selected main effects, as well as the original main effects, as candidates in a final fit. This approach has the drawbacks that important main effects may not be selected in the initial stage as they require certain interactions to be present in order for them to be useful for prediction. In addition, the initial model may contain too many main effects when, without the relevant interactions, the model selection procedure cannot find a good sparse approximation to the true model.

The Backtracking method proposed in this paper allows interactions to be added in a more natural gradual fashion, so there is a better chance of having a model which contains the right interactions. The method is computationally efficient, and our numerical results demonstrate its effectiveness for both variable selection and prediction.

From a theoretical point of view we have shown that when used with the Lasso, rather than requiring all main effects involved in interactions to be highly correlated with the signal, Backtracking only needs there to exist some ordering of these variables where those early on in the order are important for predicting the response by themselves. Variables later in the order only need to be helpful for predicting the response when interactions between variables early on in the order are present.

Though in this paper, we have largely focussed on Backtracking used with the Lasso, the method is very general and can be used with many procedures that involve sparsity-inducing penalty functions. These methods tend to be some of the most useful for dealing with high-dimensional data, as they can produce stable, interpretable models. Combined with Backtracking, the methods become much more flexible, and it would be very interesting to explore to what extent using non-linear base procedures could yield interpretable models with predictive power comparable to black-box procedures such as Random Forests \citep{breiman01random}. In addition, we believe integrating Backtracking with some of the penalty-based methods for fitting interactions to moderate-dimensional data, will prove to be a fruitful direction for future research.
 
\section*{Acknowledgements}
I am very grateful to Richard Samworth, for many helpful comments and suggestions.

\appendix
%

\section{Construction of $\mb X$ in Section~\ref{sec:motivation}}
First, consider $(Z_{i1}, Z_{i2}, Z_{i3})$ generated from a mean zero multivariate normal distribution with $\Var(Z_{ij}) = 1$, $j = 1, 2, 3$, $\Cov(Z_{i1}, Z_{i2}) = 0$ and $\Cov(Z_{i1}, Z_{i3}) = \Cov(Z_{i2}, Z_{i3}) = 1/2$. Independently generate $R_{i1}$ and $R_{i2}$ each of which takes only the values $\{-1, 1\}$, each with probability $1/2$. We form the $i$th row of the design matrix as follows:
\begin{align*}
 X_{i1} =&  R_{i1} \,\sgn(Z_{i1}) |Z_{i1}|^{1/4}, \\
X_{i2} =&  R_{i1}  |Z_{i1}|^{3/4}, \\
X_{i3} =&  R_{i2} \,\sgn(Z_{i2}) |Z_{i2}|^{1/4}, \\
X_{i4} =&  R_{i2} |Z_{i2}|^{3/4}, \\
X_{i5} =& Z_{i3}.
\end{align*}
The remaining $X_{ij}$, $j=6,\ldots,p$ are independently generated from a standard normal distribution.
Note that the random signs $R_{i1}$ and $R_{i2}$ ensure that $X_{i5}$ is uncorrelated with each of $X_{i1}, \ldots, X_{i4}$. Furthermore, the fact that $X_{i1} X_{i2} = Z_{i1}$ and $X_{i3} X_{i4} = Z_{i2}$, means that when $\beta_5 = -\tfrac{1}{2}(\beta_7 + \beta_8)$, $X_{i5}$ is uncorrelated with the response.

\section{Proofs of Theorem~\ref{thm:main} and Corollory~\ref{cor:var_sel}}
In this subsection we use many ideas from Section B of \citet{wainwright06sth} and Section 6 of \citet{BuhlmannGeer2009}.
\begin{lem} \label{lem:entry}
 Let $S \subseteq C$ be such that $X_S$ has full column rank and let $M = C \setminus S$. On the event
\begin{equation*}
 \Omega_{C, \eta} := \{ \tfrac{1}{n} \norms{\mb X_C ^T \mbb\varepsilon}_\infty \leq \eta \},
\end{equation*}
the following hold:
\begin{enumerate}[(i)]
 \item If
\begin{equation} \label{eq:lambda_cond}
 \lambda > \max_{u \in M} \left\{ \frac{\tfrac{1}{n}|\mb X_u ^T (\mb I - \mb P^S) \mb{f}^*| + 2\eta}{1 - \norms{\hat{\mbb\Sigma}_{S, S} ^{-1} \hat{\mbb\Sigma}_{S, \{u\}}}_1} \right\},
\end{equation}
then the Lasso solution is unique and $\hat{\mbb\beta}_M (\lambda, C) = \mb 0$.
\item If $\lambda$ is such that for some Lasso solution $\hat{\mbb\beta}_M (\lambda, C) = \mb 0$, and for $v \in S$,
\begin{equation*}
 |\beta^S _v| > \norms{(\hat{\mbb\Sigma}_{S, S} ^{-1})_v}_1 (\lambda + \eta),
\end{equation*}
then for all Lasso solutions, $\hat{\beta}_v (\lambda, C) \neq 0$.
\item Let 
\begin{equation*}
 \lambda^\mathrm{ent} = \sup \{\lambda : \lambda \geq 0 \text{ and for some Lasso solution }\hat{\mbb\beta}_M (\lambda, C) \neq \mb 0\},
\end{equation*}
where we take $\sup \emptyset = 0$. If for $v \in S$,
\begin{equation*}
 |\beta^S _v| > \max_{u \in M} \left\{ \frac{\tfrac{1}{n}|\mb X_u^T (\mb I - \mb P^S) \mb{f}^*| + 2\eta}{1 - \norms{\hat{\mbb\Sigma}_{S, S} ^{-1} \hat{\mbb\Sigma}_{S, \{u\}}}_1} + \eta \right\} \norms{(\hat{\mbb\Sigma}_{S, S} ^{-1})_v}_1,
\end{equation*}
there exists a $\lambda > \lambda^\mathrm{ent}$ such that the solution $\hat{\mbb \beta}(\lambda, C)$ is unique, and for all $\lambda' \in (\lambda^\mathrm{ent}, \lambda]$ and all Lasso solutions $\hat{\mbb \beta}(\lambda', C)$, we have $\hat{\beta}_v (\lambda', C) \neq 0.$
\end{enumerate}
\end{lem}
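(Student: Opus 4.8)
The plan is a primal--dual witness argument built on the Lasso KKT conditions \eqref{eq:KKT_non_zero}--\eqref{eq:KKT_zero}, working throughout on the event $\Omega_{C, \eta}$. Since $\mb X_S$ has full column rank, the Lasso restricted to the coordinates in $S$ has a strictly convex objective and hence a unique minimiser $\check{\mbb\beta}_S$, with some subgradient $\check{\mbb z}_S$, $\norms{\check{\mbb z}_S}_\infty \leq 1$; solving its stationarity equation and using $\mbb\Sigma_{S, S}^{-1} \tfrac{1}{n} \mb X_S^T \mb{f}^* = \mbb\beta^S$ yields the key identity $\check{\mbb\beta}_S = \mbb\beta^S + \mbb\Sigma_{S, S}^{-1} (\tfrac{1}{n} \mb X_S^T \mbb\varepsilon) - \lambda \mbb\Sigma_{S, S}^{-1} \check{\mbb z}_S$. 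The natural witness for the full problem is $(\check{\mbb\beta}_S, \mb 0)$, whose residual is $\mb Y - \mb X_S \check{\mbb\beta}_S$; substituting the identity shows, for $u \in M$,
\[
 \tfrac{1}{n} \mb X_u^T (\mb Y - \mb X_S \check{\mbb\beta}_S) = \tfrac{1}{n} \mb X_u^T (\mb I - \mb P^S) \mb Y + \lambda \tfrac{1}{n} \mb X_u^T \mb X_S \mbb\Sigma_{S, S}^{-1} \check{\mbb z}_S .
\]
Using the identity $(\mb I - \mb P^S) \mb X_u = \mb X_u - \mb X_S \mbb\Sigma_{S, S}^{-1} \mbb\Sigma_{S, \{u\}}$, on $\Omega_{C, \eta}$ the noise part of the first term is at most $\eta (1 + \norms{\mbb\Sigma_{S, S}^{-1} \mbb\Sigma_{S, \{u\}}}_1)$, and the last term is at most $\lambda \norms{\mbb\Sigma_{S, S}^{-1} \mbb\Sigma_{S, \{u\}}}_1$ in absolute value since $\norms{\check{\mbb z}_S}_\infty \leq 1$.

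For part (i), since \eqref{eq:irrep} (and implicitly the positivity of the denominator in \eqref{eq:lambda_cond}) gives $\norms{\mbb\Sigma_{S, S}^{-1} \mbb\Sigma_{S, \{u\}}}_1 < 1$, the displayed quantity is bounded in absolute value by $\tfrac{1}{n} |\mb X_u^T (\mb I - \mb P^S) \mb{f}^*| + 2\eta + \lambda \norms{\mbb\Sigma_{S, S}^{-1} \mbb\Sigma_{S, \{u\}}}_1$, and \eqref{eq:lambda_cond} is precisely the condition making this strictly less than $\lambda$ for every $u \in M$. Strict dual feasibility across $M$ then forces $\hat\beta_u = 0$ for all $u \in M$ in every Lasso solution (the residual of any solution equals $\mb Y - \mb X_S \check{\mbb\beta}_S$, so $\hat\beta_u \neq 0$ would require $\tfrac{1}{n}|\mb X_u^T (\mb Y - \mb X_S \check{\mbb\beta}_S)| = \lambda$), after which full column rank of $\mb X_S$ pins the $S$-block down to $\check{\mbb\beta}_S$, giving uniqueness. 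For part (ii), any Lasso solution with $\hat{\mbb\beta}_M = \mb 0$ has its $S$-block solving the restricted problem, hence equal to $\check{\mbb\beta}_S$; reading off coordinate $v$ of the key identity gives $|\check\beta_v| \geq |\beta^S_v| - \norms{(\mbb\Sigma_{S, S}^{-1})_v}_1 (\lambda + \eta) > 0$ on $\Omega_{C, \eta}$ by the hypothesis.

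For part (iii), let $\lambda_0$ denote the right-hand side of \eqref{eq:lambda_cond}. Part (i) shows that for every $\lambda > \lambda_0$ the Lasso solution is unique with $\hat{\mbb\beta}_M = \mb 0$, whence $\lambda^\mathrm{ent} \leq \lambda_0$. The hypothesis $|\beta^S_v| > (\lambda_0 + \eta) \norms{(\mbb\Sigma_{S, S}^{-1})_v}_1$ together with continuity produces a $\lambda > \lambda_0$ with $|\beta^S_v| > (\lambda + \eta) \norms{(\mbb\Sigma_{S, S}^{-1})_v}_1$; for every $\lambda' \in (\lambda^\mathrm{ent}, \lambda]$ the definition of $\lambda^\mathrm{ent}$ forces all Lasso solutions at $\lambda'$ to have $\hat{\mbb\beta}_M(\lambda', C) = \mb 0$, so each is unique and equal to the restricted solution, and part (ii) applied at $\lambda'$ (legitimate since $\lambda' \leq \lambda$) yields $\hat\beta_v(\lambda', C) \neq 0$.

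The step I expect to require the most care is the bookkeeping around non-uniqueness of Lasso solutions, so that the ``for all Lasso solutions'' assertions in parts (ii) and (iii) are genuinely established and not merely for the witness solution; the device that makes this clean is the observation that above $\lambda^\mathrm{ent}$ full column rank of $\mb X_S$ already determines the solution, and that within part (iii) part (ii) is only ever invoked in that regime. The remaining ingredients---the projection and block identities, and the $\Omega_{C, \eta}$ bounds on $\tfrac{1}{n}\mb X_C^T \mbb\varepsilon$---are routine.
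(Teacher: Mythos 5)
Your proposal is correct and follows essentially the same primal--dual witness route as the paper: build the restricted-to-$S$ Lasso solution, extend by zeros on $M$, verify strict dual feasibility on $\Omega_{C,\eta}$ via the $\mb P^S$ decomposition for (i), read off $\hat{\mbb\beta}_S = \mbb\beta^S - \mbb\Sigma_{S,S}^{-1}(\lambda\hat{\mbb\tau}_S - \tfrac{1}{n}\mb X_S^T\mbb\varepsilon)$ for (ii), and combine for (iii). The only cosmetic differences are that you argue uniqueness directly from shared fitted values and full column rank where the paper cites Lemma~1 of \citet{wainwright06sth}, and you spell out the (i)--(ii) combination for (iii) that the paper leaves as ``follows easily''.
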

\begin{proof}
We begin by proving (i).
Suppressing the dependence of $\hat{\mbb \beta}$ on $\lambda$ and $C$, we can write the KKT conditions (\eqref{eq:KKT_non_zero}, \eqref{eq:KKT_zero}) as
\[
 \frac{1}{n} \mb X_C ^T (\mb Y - \mb X_C \hat{\mbb \beta}) = \lambda \hat{\mbb\tau},
\]
where $\hat{\mbb\tau}$ is an element of the subdifferential $\partial \norms{\hat{\mbb \beta}}_1$ and thus satisfies
\begin{gather}
 \norms{\hat{\mbb\tau}}_\infty \leq 1, \label{eq:infty_norm} \\
\hat{\beta}_v \neq 0 \Rightarrow \hat{\tau}_v = \sgn(\hat{\beta}_v). \label{eq:l1_norm_equiv}
\end{gather}
By decomposing $\mb Y$ as $\mb P^S \mb{f}^* + (\mb I - \mb P^S)\mb{f}^* + \mb \varepsilon$, $\mb X_C$ as $(\mb X_S \, \mb X_M)$, and noting that $\mb X_S ^T (\mb I - \mb P^S) = \mb 0$, we can rewrite the KKT conditions in the following way:
\begin{gather}
 \tfrac{1}{n} \mb X_S ^T(\mb P^S \mb{f}^*  - \mb X_S \hat{\mbb\beta}_S) + \tfrac{1}{n} \mb X_S ^T \mbb\varepsilon - \hat{\mbb\Sigma}_{S, M} \hat{\mbb\beta}_{J^*} = \lambda \hat{\mbb\tau}_{S}, \label{eq:KKT_1} \\
 \tfrac{1}{n} \mb X_M ^T(\mb P^S \mb{f}^*  - \mb X_S \hat{\mbb\beta}_S) + \tfrac{1}{n} \mb  X_M ^T \{(\mb I - \mb P^S) \mb{f}^* + \mbb\varepsilon\} - \hat{\mbb\Sigma}_{M, M} \hat{\mbb\beta}_M = \lambda \hat{\mbb\tau}_M. \label{eq:KKT_2}
\end{gather}
Now let $\breve{\mbb\beta}_S$ be a solution to the restricted Lasso problem,
\[
 (\hat{\mu}, \breve{\mbb\beta}_S ) = \argmin{\mu, \mbb\beta_S} \left\{ \tfrac{1}{2n} \norms{\mb Y - \mu \mathbf{1} - \mb X_S \mbb\beta_S}^2 + \lambda \norms{\mbb\beta_S}_1 \right\}.
\]
The KKT conditions give that $\breve{\mbb\beta}_S$ satisfies
\begin{equation} \label{eq:red_KKT}
  \frac{1}{n} \mb X_S ^T (\mb Y - \mb X_S \breve{\mbb\beta}_S) = \lambda \breve{\mbb\tau}_S,
\end{equation}
where $\breve{\mbb\tau}_S \in \partial \norms{\breve{\mbb\beta}_S}_1$. We now claim that
\begin{gather}
 (\hat{\mbb\beta}_S, \hat{\mbb\beta}_M) = (\breve{\mbb\beta}_S, \mb 0) \label{eq:trial1} \\
 (\hat{\mbb\tau}_S, \hat{\mbb\tau}_M) = \left( \breve{\mbb\tau}_S, \, \hat{\mbb\Sigma}_{M, S} \hat{\mbb\Sigma}_{S, S} ^{-1} (\breve{\mbb\tau}_S - \tfrac{1}{n} \lambda^{-1} \mb X_S ^T \mbb\varepsilon) + \tfrac{1}{n} \lambda^{-1} \mb X_M ^T \{(\mb I - \mb P^S) \mb{f}^* + \mbb\varepsilon\}\right) \label{eq:trial2}
\end{gather}
is the unique solution to \eqref{eq:KKT_1}, \eqref{eq:KKT_2}, \eqref{eq:infty_norm} and \eqref{eq:l1_norm_equiv}. Indeed, as $\breve{\mbb\beta}_S$ solves the reduced Lasso problem, we must have that \eqref{eq:KKT_1} and \eqref{eq:l1_norm_equiv} are satisfied. Multiplying \eqref{eq:KKT_1} by $\mb X_S \hat{\mbb\Sigma}_{S, S} ^{-1}$, setting $\hat{\mbb\beta}_M = \mb 0$ and rearranging gives us that
\begin{equation} \label{eq:beta_S}
 \mb P^S \mb{f}^* - \mb X_S \hat{\mbb\beta}_S = \mb X_S \hat{\mbb\Sigma}_{S, S} ^{-1} (\lambda \hat{\mbb\tau}_S - \tfrac{1}{n} \mb X_S ^T \mbb\varepsilon),
\end{equation}
and substituting this into \eqref{eq:KKT_2} shows that our choice of $\hat{\mbb\tau}_M$ satisfies \eqref{eq:KKT_2}. It remains to check that we have $\norms{\hat{\mbb\tau}_M}_\infty \leq 1$. In fact, we shall show that $\norms{\hat{\mbb\tau}_M}_\infty < 1$. Since we are on $\Omega_{C,\eta}$ and $\norms{\breve{\mbb\tau}_S}_\infty \leq 1$, for $u \in M$ we have
\begin{align*}
 \lambda |\hat{\mbb\tau}_u| & \leq \norms{\hat{\mbb\Sigma}_{S, S} ^{-1} \hat{\mbb\Sigma}_{S, \{u\}}}_1 \left( \lambda \norms{\breve{\mbb\tau}_S}_\infty + \norms{\tfrac{1}{n}\mb X_S ^T \mbb\varepsilon}_\infty \right) + \tfrac{1}{n}\abs{\mb X_u^T (\mb I - \mb P^S) \mb{f}^*} + \tfrac{1}{n}\abs{\mb X_u^T \mbb\varepsilon} \\
 & < \lambda \norms{\hat{\mbb\Sigma}_{S, S} ^{-1} \hat{\mbb\Sigma}_{S, \{u\}}}_1 + \tfrac{1}{n}\abs{\mb X_u^T (\mb I - \mb P^S) \mb{f}^*} + 2\eta \\
& < \lambda,
\end{align*}
where the final inequality follows from \eqref{eq:lambda_cond}. We have shown that there exists a solution, $\hat{\mbb \beta}$, to the Lasso optimisation problem with $\hat{\mbb\beta}_M = 0$. The uniqueness of this solution follows from noting that $\norms{\hat{\mbb\tau}_M}_\infty < 1$, $\mb X_S$ has full column rank and appealing to Lemma 1 of \citet{wainwright06sth}.

For (ii), note that from \eqref{eq:KKT_1}, provided $\hat{\mbb\beta}_M = 0$, we have that
\begin{equation*}
 \hat{\mbb\beta}_S = \mbb\beta^S - \hat{\mbb\Sigma}_{S, S} ^{-1} (\lambda \hat{\mbb\tau}_S - \tfrac{1}{n} \mb X_S^T \mbb \varepsilon ).
\end{equation*}
But by assumption
\begin{align*}
 |\beta^S _v| >  \norms{(\hat{\mbb\Sigma}_{S, S} ^{-1})_v}_1 (\lambda + \eta) \geq \abs{{(\hat{\mbb\Sigma}_{S, S} ^{-1})}_v ^T (\lambda \hat{\mbb\tau}_S - \tfrac{1}{n}\mb X_S ^T \mbb\varepsilon)},
\end{align*}
whence $\hat{\beta}_v \neq 0$.

(iii) follows easily from (i) and (ii).
\end{proof}

\vspace{\abovedisplayshortskip}
\paragraph{Proof of Theorem~\ref{thm:main}.}
In all that follows, we work on the event $\Omega_{C^*, \eta}$ defined in Lemma~\ref{lem:entry}. Using standard bounds for the tails of Gaussian random variables and the union bound, it is easy to show that $\pr (\Omega_1 \cap \Omega_{C^*, \eta}) \geq 1 - \exp(-t^2 /2)$. Let $N=\{1,\ldots,p\}\setminus S_1^*$.

Let $\tilde{T}$ be the number of steps taken by the algorithm: this would typically be $T$, but may be smaller if a perfect fit is reached or if $p < T$ for example.
Let $C_{k}$ be the largest member of $\{C_1, \ldots, C_{\tilde{T}}\}$ satisfying $C_{k} \subseteq C^*$. Such a $C_{k}$ exists since $C_1 \subseteq C^*$.

Now suppose for a contradiction that $C_k \nsupseteq S^*$. Let $j$ be such that
\[
 \mathcal{I}(1, \ldots, j-1) \subseteq C_k,
\]
with $j$ maximal. Since $\mathcal{I} (1) = \emptyset$, such a $j$ exists. Let $A = C_k \setminus C_1$. Note that $A \in \mathcal{J}$ and
\[
 \mathcal{I}(1,\ldots,j-1) \subseteq A \subseteq C^* \setminus C_1 = \mathcal{I}(S_1^*).
\]
By the entry order condition, we know that $j$ will enter the active set before any variable in $N$, and before a perfect fit is reached. Thus $k+1\leq \tilde{T}$ and $C_{k+1}$ contains only additional interactions not involving any variables from $N$, so $C_{k+1} \subseteq C^*$. 
\qed

\vspace{\abovedisplayshortskip}
\paragraph{Proof of Corollary~\ref{cor:var_sel}.}
Let $\Omega_{C^*, \eta}$ be defined as in Lemma~\ref{lem:entry}. Also define the events 
\begin{gather*}
 \Omega_1 =  \{\tfrac{1}{n} \norms{\mb X_N ^T(\mb I - \mb P^{S^*}) \mbb\varepsilon}_\infty \leq \eta\}, \\
 \Omega_2 =  \{\tfrac{1}{n} \norms{\hat{\mbb\Sigma}_{S^*, S^*} ^{-1} \mb X_{S^*} ^T  \mbb\varepsilon}_\infty \leq \xi\}
\end{gather*}
In all that follows, we work on the event $\Omega_1 \cap \Omega_2 \cap \Omega_{C^*, \eta}$.
As $\mb I - \mb P^{S^*}$ is a projection,
\[
 \pr(\tfrac{1}{n}|{\mb X_v} ^T (\mb I - \mb P^{S^*}) \mbb\varepsilon| \leq \eta ) \geq \pr(\tfrac{1}{n}|{\mb X_v}^T \mbb\varepsilon| \leq \eta ).
\]
Further, $\tfrac{1}{n} \hat{\mbb\Sigma}_{S^*, S^*} ^{-1} \mb X_{S^*} ^T  \mbb\varepsilon \sim N_{|S^*|}(\mb 0,\tfrac{1}{n} \sigma^2 \hat{\mbb\Sigma}_{S^*, S^*} ^{-1} )$. Thus
\[
 \pr(\Omega_3) \geq |S^*| \pr (|Z| \leq \xi)
\]
where $Z \sim N(0,  \sigma^2 / (n c_{\mathrm{min}}(\hat{\mbb\Sigma}_{S^*, S^*})) )$.
Note that
\[
 \pr(\Omega_1 \cap \Omega_2 \cap \Omega_{C^*, \eta}) \geq 1 - \pr(\Omega_{C^*, \eta} ^c) - \pr(\Omega_1 ^c)- \pr(\Omega_2 ^c).
\]
Using this, it is straightforward to show that $\pr(\Omega_1 \cap \Omega_2 \cap \Omega_{C^*, \eta}) \geq 1 - 3\exp(-t^2 / 2)$. 

Since we are on $\Omega_{C^*, \eta}$, we can assume the existence of a $k^*$ from Theorem~\ref{thm:main}.
We now follow the proof of Lemma~\ref{lem:entry} taking $S = S^*$ and $M = C_{k^*} \setminus S^* \subseteq N$. The KKT conditions become
\begin{gather}
 \hat{\mbb\Sigma}_{S^*, S^*} (\mbb\beta^* _{S^*} - \hat{\mbb\beta}_{S^*}) + \tfrac{1}{n} \mb X_{S^*} ^T \mbb\varepsilon - \hat{\mbb\Sigma}_{S^*, M} \hat{\mbb\beta}_M = \lambda \hat{\mbb\tau}_{S^*}, \label{eq:KKT_21}\\
\hat{\mbb\Sigma}_{M, S^*} (\mbb\beta^* _{S^*} - \hat{\mbb\beta}_{S^*}) + \tfrac{1}{n} \mb X_M ^T \mbb\varepsilon - \hat{\mbb\Sigma}_{M, M} \hat{\mbb\beta}_M = \lambda \hat{\mbb\tau}_M, \label{eq:KKT_22}
\end{gather}
with $\hat{\mbb\tau}$ also satisfying \eqref{eq:infty_norm} and \eqref{eq:l1_norm_equiv} as before. Now let $\lambda$ be such that
\begin{equation*}
 \frac{\eta}{1 - \norms{\hat{\mbb\Sigma}_{M, S^*} \hat{\mbb\Sigma}_{S^*, S^*} ^{-1} \sgn(\mbb\beta^* _{S^*})}_\infty} < \lambda < \min_{v \in S^*} \left\{ \abs{\sgn(\mbb\beta^* _{S^*})^T (\hat{\mbb\Sigma}_{S^*, S^*} ^{-1})_v}^{-1} (| \beta^* _v | - \xi) \right\}.
\end{equation*}
It is straightforward to check that
\begin{gather*}
  (\hat{\mbb\beta}_{S^*}, \hat{\mbb\beta}_M) = (\mbb\beta^* _{S^*} - \lambda \hat{\mbb\Sigma}_{S^*, S^*} ^{-1} \sgn(\mbb\beta^* _{S^*}) + \tfrac{1}{n} \hat{\mbb\Sigma}_{S^*, S^*} ^{-1} \mb X_{S^*} ^T \mbb\varepsilon, \, \mb 0)  \\
 (\hat{\mbb\tau}_{S^*}, \hat{\mbb\tau}_M) = \left( \sgn(\mbb\beta^* _{S^*}), \, \hat{\mbb\Sigma}_{M, S^*} \hat{\mbb\Sigma}_{S^*, S^*} ^{-1} \sgn(\mbb\beta^* _{S^*}) + \tfrac{1}{n} \lambda^{-1} \mb X_M ^T (\mb I - \mb P^{S^*}) \mbb\varepsilon\right)
\end{gather*}
is the unique solution to \eqref{eq:KKT_21}, \eqref{eq:KKT_22}, \eqref{eq:infty_norm} and \eqref{eq:l1_norm_equiv}.
\qed

\section{Proof of Theorem~\ref{thm:random}}
In the following, we make use of notation defined in Section~\ref{sec:fixed}. In addition, for convenience we write $S=S_1^*$, $M= S \cup J^*$. Also, we will write main effects variables $\{j\}$ as simply $j$.
For any matrix $\mb M$, $\|\mb M\|_\infty$ will denote $\max_{jk} |M_{jk}|$. 
First we collect together various results concerning $\hat{\mbb\Sigma}_{C^*,C^*}$.
\begin{lem} \label{lem:hat_Sigma}
Consider the setup of Theorem~\ref{thm:random}. Let $\E_n$ and $\Var_n$ denote empirical expectation and variance with respect to $\mb Z$ so that, for example $\E_n z_j = \sum_{i=1}^n Z_{ij}/n$.
\begin{enumerate}[(i)]
\item Let $\mb D$ be the diagonal matrix indexed by $C^*$ used to scale transformations of $\mb Z$ in order to create $\mb X_{C^*}$ i.e.\ with entries such that $D_{jj}^2=\Var_n(z_j)$ and $D_{vv}^2 = \Var_n(z_j - \E_n z_j)(z_k - \E_n z_k)$ when $v=\{j,k\}$. Then
\begin{align}
\max_{j\in C_1} |D_{jj}^2 -1| &= O_P(\sqrt{\log(p)/n}) \label{eq:main_scale}\\
 \max_{\{j,k\} \in M} |D_{\{j,k\},\{j,k\}}^2 -1 -\Sigma_{jk}^2| &= O_P(\sqrt{\log(s_1)}n^{-1/4})  \label{eq:inter_scale}
\end{align}
\item 
\begin{align}
 \tfrac{1}{n}\|\mb X_{J^*} ^T \mb X_S\|_\infty &= O_P(\sqrt{\log(s_1)}n^{-1/3})  \label{eq:l_infty_inter} \\
  c_{\text{min}} (\hat{\mbb\Sigma}_{S,S}) &\geq c_* - s_1 O_P(\sqrt{\log(s_1)/n}) \\
 c_{\text{min}} (\hat{\mbb\Sigma}_{M,M}) &\geq c_*^2 +s_1^2O_P(\sqrt{\log(s_1)}n^{-1/4}) \\
 c_{\text{max}} (\hat{\mbb\Sigma}_{J^*,J^*}) &\leq 2{c^*}^2 + s_1^2O_P(\sqrt{\log(s_1)}n^{-1/4}). \label{eq:c_max_inter}
\end{align}
\end{enumerate}
\end{lem}
\begin{proof}
We use bounds on the tails of products of normal random variables from \citet{Hao2014} (equation B.9).
We have
\begin{align*}
\max_{j,k}|\Cov_n(z_j, z_k) -\Sigma_{jk}|&= \max_{j,k} |\E_n(z_jz_k) - \E_n z_j \E_n z_k - \Sigma_{jk}|\\
&= O_P(\sqrt{\log(p)/n}).
\end{align*}
Also,
\begin{align*}
&\max_{j,k,l,m \in S}|\Cov_n\big((z_j-\E_n z_j)(z_k - \E_n z_k),\, (z_l - \E_n z_l)(z_m - \E_n z_m)\big) - \Sigma_{jl}\Sigma_{km} - \Sigma_{jm}\Sigma_{kl}|
\\
&= \max_{j,k,l,m \in S}|\E_n(z_jz_kz_lz_m) -\E_n(z_jz_k)\E_n(z_lz_k) - \Sigma_{jl}\Sigma_{km} - \Sigma_{jm}\Sigma_{kl}| + O_P(\sqrt{\log(s_1)/n}) \\
&= O_P(\sqrt{\log(s_1)}n^{-1/4}).
\end{align*}

Now we consider (ii). We have
\begin{align*}
\tfrac{1}{n}\|\mb X_{J^*}^T \mb X_S\|_\infty &\leq \max_{v \in J^*} D_{vv}^{-1} \max_{k \in S}D_{kk}^{-1} \max_{j,k,l \in S} |\Cov_n\big((z_j-\E_nz_j)(z_k-\E_nz_k),\, z_l\big)| \\
&\leq O_P(\sqrt{\log(s_1)} n^{-1/3}),
\end{align*}
the rate being driven by the size of $\E_n(z_jz_kz_l)$.
Also
\begin{align*}
c_{\text{min}}(\hat{\mbb\Sigma}_{S,S}) &= \min_{\mbb\tau \in \R^{s_1}: \|\mbb\tau\|_2=1} \mbb\tau\{\mbb\Sigma_{S,S} - (\mbb\Sigma_{S,S} - \hat{\mbb\Sigma}_{S,S} )\}\mbb\tau \\
&\geq c_{\text{min}}(\mbb\Sigma_{S,S}) - \max_{\mbb\tau \in \R^{s_1}: \|\mbb\tau\|_2=1}\|\mbb\tau\|_1^2 \|\mbb\Sigma_{S,S} - \hat{\mbb\Sigma}_{S,S}\|_\infty \\
&= c_* - s_1 O_P(\sqrt{\log(s_1)/n}).
\end{align*}

Now let $\tilde{\mbb\Sigma}$ be a matrix with entries indexed by $M$ with
\[
\tilde{\Sigma}_{uv}=\Sigma_{jl}\Sigma_{km} + \Sigma_{jm}\Sigma_{kl}
\]
when $u=\{j,k\}$ and $v=\{l,m\}$. Lemma A.4 of \citet{Hao2014} shows that
$c_\text{min}(\tilde{\mbb\Sigma}) \geq 2c_{\text{min}}(\mbb\Sigma_{S,S})^2$ and $c_\text{max}(\tilde{\mbb\Sigma}) \leq 2 c_{\text{max}}(\mbb\Sigma_{S,S})^2$. Thus we have
\begin{align*}
c_{\text{min}}(\hat{\mbb\Sigma}_{M,M}) &= \min_{\mbb\tau \in \R^{|M|}:\|\mb D_{M,M} \mbb \tau\|_2=1} \mbb\tau \mb D_{M,M} \hat{\mbb\Sigma}_{M,M} \mb D_{M,M}\mbb\tau
\\
&\geq \|\mb D_{M,M}\|_\infty^{-1} c_{\text{min}}(\mb D_{M,M} \hat{\mbb\Sigma}_{M,M} \mb D_{M,M}) \\
&\geq \{1 + O_P(\sqrt{\log(s_1)}n^{-1/4})\} [c_*^2 -s_1^2 \{\|\tilde{\mbb\Sigma} - \mb D_{M,M} \hat{\mbb\Sigma}_{M,M} \mb D_{M,M})\|_\infty + O_P(\sqrt{\log(s_1)}n^{-1/3})\}] \\
&\geq c_*^2 +s_1^2O_P(\sqrt{\log(s_1)}n^{-1/4}). 
\end{align*}
Similarly
\begin{align*}
c_{\text{max}}(\hat{\mbb\Sigma}_{J^*,J^*}) &= \max_{\mbb\tau \in \R^{|J^*|}:\|\mb D_{J^*,J^*} \mbb \tau\|_2=1} \mbb\tau \mb D_{J^*,J^*} \hat{\mbb\Sigma}_{J^*,J^*} \mb D_{J^*,J^*}\mbb\tau
\\
&\leq \{1 - O_P(\sqrt{\log(s_1)}n^{-1/4})\} c_{\text{max}}(\mb D_{J^*,J^*} \hat{\mbb\Sigma}_{J^*,J^*} \mb D_{J^*,J^*}) \\
&\leq \{1 - O_P(\sqrt{\log(s_1)}n^{-1/4})\} \{2{c^*}^2 +s_1^2 \|\tilde{\mbb\Sigma} - \mb D_{J^*,J^*} \hat{\mbb\Sigma}_{J^*,J^*} \mb D_{J^*,J^*})\|_\infty\} \\
&\leq 2{c^*}^2 +s_1^2O_P(\sqrt{\log(s_1)}n^{-1/4}). 
\end{align*}
\end{proof}

\begin{lem}\label{lem:LHS_bd} Working with the assumptions of Theorem~\ref{thm:random}, we have
\[
\max_{A \in \mathcal{J}} \|\mbb\beta_S^{S \cup A} - \mbb\beta^*_S\|_\infty \leq O_P(\sqrt{s_1^3\log(s_1)}n^{-1/3}).
\]
\end{lem}
\begin{proof}
For $A \in \mathcal{J}$ let $\mbb\Delta^A \in \R^{|S \cup A|}$ with $\mbb\Delta^A_S = \mbb\beta^{S \cup A}_S - \mbb\beta^*_S$ and $\mbb\Delta^A_A=\mbb\beta^{S \cup A}_A$.
Define $\mb g^* = \mb X_{S_2^*} \mbb\beta^*_{S_2^*}$. Note that
\[
\mb f^* = \mb X_{S} \mbb\beta^*_{S} + \mb g^*,
\]
so
\[
\mbb\Delta^A =  (\mb X_{S \cup A}^T\mb X_{S \cup A})^{-1}\mb X_{S \cup A} \mb g^*.
\]
First we bound $\|\mbb\Delta^A_A\|_2^2$ in terms of $\|\mb g^*\|_2^2$.
We have that
\[
\|\mb X_{S\cup A}\mbb\Delta^A\|_2^2 = \|\mb X_S\mbb\Delta^A_S\|_2^2 + 2{\mbb\Delta^A_S}^T\mb X_S^T \mb X_A \mbb\Delta^A_A + \|\mb X_A \mbb\Delta^A_A\|_2^2 \leq \|\mb g^*\|_2^2.
\]
Thus
\[
c_{\text{min}}(\tfrac{1}{n}\mb X_S^T\mb X_S) \|\mbb\Delta^A_S\|_2^2 -2\sqrt{|A||S|} \|\tfrac{1}{n}\mb X_S^T \mb X_A\|_\infty \|\mbb\Delta^A_A\|_2\|\mbb\Delta^A_S\|_2 +  c_{\text{min}}(\tfrac{1}{n}\mb X_A^T\mb X_A) \|\mbb\Delta^A_A\|_2^2 - \tfrac{1}{n}\|\mb g^*\|_2^2 \leq 0.
\]
Thinking of this as a quadratic in $\|\mbb\Delta^A_S\|_2$ and considering the discriminant yields
\[
\|\mbb\Delta^A_A\|_2^2 \leq \frac{ \tfrac{1}{n} c_{\text{min}}(\tfrac{1}{n}\mb X_S^T\mb X_S)\|\mb g^*\|_2^2}{c_{\text{min}}(\tfrac{1}{n}\mb X_S^T\mb X_S) c_{\text{min}}(\tfrac{1}{n}\mb X_A^T\mb X_A) - \|\tfrac{1}{n}\mb X_S^T \mb X_A\|_\infty^2|A||S| }.
\]
Thus by Lemma~\ref{lem:hat_Sigma} (ii) and condition A2, $\max_{A \in \mathcal{J}}\|\mbb\Delta^A_A\|_2 = \tfrac{1}{\sqrt{n}}\|\mb g^*\|_2 O_P(1)$.

But
\[
\frac{1}{\sqrt{n}} \|\mb g^*\|_2 \leq \sqrt{c_{\text{max}}( \hat{\mbb\Sigma}_{J^*,J^*})} \|\mbb\beta^*_{S_2^*}\|_2 =O_P(1)
\]
by Lemma~\ref{lem:hat_Sigma} (ii) and A5, so $\max_{A \in \mathcal{J}}\|\mbb\Delta^A_A\|_2=O_P(1)$. 

Next observe that
\[
\|\mb X_{S \cup A}\mbb\Delta^A - \mb g^*\|_2^2 \leq \|\mb X_A \mbb\Delta^A_A - \mb g^*\|_2^2,
\]
so
\begin{align*}
\|\mbb\Delta^A_S\|_2^2 c_{\text{min}}(\tfrac{1}{n}\mb X_S^T\mb X_S) &\leq  \tfrac{1}{n} \|\mb X_S \mbb\Delta^A_S\|_2^2 \\
&\leq 2|\tfrac{1}{n}{\mbb\Delta^A_S}^T\mb X_S^T (\mb X_A \mbb\Delta^A_A - \mb g^*)| \\
&\leq 2\sqrt{|A||S|}\|\mbb\Delta^A_S\|_2 \|\tfrac{1}{n}\mb X_S^T \mb X_A\|_\infty \|\mbb\Delta^A_A\|_2 +2\|\mbb\Delta^A_S\|_2 \|\tfrac{1}{n}\mb X_S^T \mb g^*\|_2.
\end{align*}
Therefore
\begin{align*}
\|\mbb\Delta^A_S\|_\infty \leq 2\{c_{\text{min}}(\tfrac{1}{n}\mb X_S^T\mb X_S)\}^{-1} (\sqrt{|A||S|} \|\tfrac{1}{n}\mb X_S^T\mb X_A\|_\infty  \|\mbb\Delta^A_A\|_2 +\|\tfrac{1}{n} \mb X_S^T \mb g^*\|_2),
\end{align*}
so
\[
\max_{A \in \mathcal{J}} \|\mbb\Delta^A_S\|_\infty \leq 2\{c_{\text{min}}(\tfrac{1}{n}\mb X_S^T\mb X_S)\}^{-1} (\sqrt{|S||J^*|} \|\tfrac{1}{n}\mb X_S^T\mb X_{J^*}\|_\infty O_P(1) +\|\tfrac{1}{n} \mb X_S^T \mb g^*\|_2).
\]
Now
\begin{align*}
\|\tfrac{1}{n} \mb X_S^T \mb g^*\|_2 &\leq \sqrt{s_1}\|\tfrac{1}{n}\mb X_S^T\mb X_{S_2^*}\|_\infty \|\mbb\beta^*_{S_2^*}\|_1 \\
&\leq O_P(s_1 \sqrt{\log(s_1)}n^{-1/3}).
\end{align*}
Thus
\[
\max_{A \in \mathcal{J}} \|\mbb\Delta^A_S\|_\infty \leq O_P( \sqrt{s_1^3\log(s_1)}n^{-1/3}).
\]
\end{proof}

\vspace{\abovedisplayshortskip}
\paragraph{Proof of Theorem~\ref{thm:random}.}
In view of Theorem~\ref{thm:main} and its proof, it is enough to show that with probability tending to 1, we have
\begin{gather}
\max_{A \in \mathcal{J}} \sup_{\tau \in \R^{s_1}} \|\hat{\mbb\Sigma}_{N,S \cup A}\hat{\mbb\Sigma}^{-1}_{S \cup A,S \cup A}\mbb\tau\|_{\infty} < 1, \label{eq:rand_irep}\\
\min_{j \in I^*}\min_{A \in \mathcal{J}} |\beta_j^{S\cup A}| > \max_{A \in \mathcal{J}} \max_{j \in N} \bigg\{\frac{\tfrac{1}{n}|\mb X_j^T(\mb I -\mb P^{S\cup A})\mb f^*| + 2 \tfrac{1}{n}\|\mb X_{C^*}^T \mbb\varepsilon\|_\infty}{1 - \|\hat{\mbb\Sigma}^{-1}_{S \cup A,S \cup A}\hat{\mbb\Sigma}_{S \cup A,j}\|_{1}} + \tfrac{1}{n}\|\mb X_{C^*}^T \mbb\varepsilon\|_\infty\bigg\} \frac{\sqrt{|M|}}{c_{\text{min}}(\hat{\mbb\Sigma}_{M,M})}. \label{eq:rand_ent}
\end{gather}
First note that for $j \in N$, $\mb Z_j = \mb Z_S \mbb\Sigma_{S,S}^{-1} \mbb\Sigma_{S,j} + \mb E_j$ where $\mb E_j$ is independent of $\mb Z_S$ and $\mb E_j \sim N_n(\mb 0, (1 - \mbb\Sigma_{j,S}\mbb\Sigma_{S,S}^{-1}\mbb\Sigma_{S,j}) \mb I)$.
Thus
\[
 \mb X_j D_{jj} = \mb X_S \mb D_{S,S} \mbb\Sigma_{S,S}^{-1} \mbb\Sigma_{S,j} + \mb E_j - \mb{1}\bar{E}_j,
 \]
and
\begin{align*}
\max_{A \in \mathcal{J}} \|(\mb X_{S\cup A}^T \mb X_{S \cup A})^{-1} \mb X_{S\cup A}^T \mb X_j \|_1 &\leq D_{kk}^{-1} \| \mb D_{S,S} \mbb\Sigma_{S,S}^{-1} \mbb\Sigma_{S,j}\|_1 + \max_{A \in \mathcal{J}} \|\hat{\mbb\Sigma}_{S\cup A, S \cup A}^{-1} \tfrac{1}{n} \mb X_{S \cup A}^T \mb E_j\|_1.
\end{align*}
Now the second term above is at most
\begin{align*}
\max_{A \in \mathcal{J}} \max_{\mbb\tau \in \R^{|S \cup A|}:\|\mbb\tau\|_2 \leq 1} \|\hat{\mbb\Sigma}_{S\cup A,S \cup A}^{-1} \mbb\tau\|_1 \|\tfrac{1}{n}\mb X_{M}^T \mb E_j\|_2.
\end{align*}
But
\begin{align*}
\max_{A \in \mathcal{J}} \max_{\mbb\tau \in \R^{|S \cup A|}:\|\mbb\tau\|_\infty \leq 1} \|\hat{\mbb\Sigma}_{S\cup A,S \cup A}^{-1} \mbb\tau\|_1 &\leq \frac{\sqrt{|M|}}{c_{\text{min}}(\hat{\mbb\Sigma}_{M,M})} \\
&\leq \frac{\sqrt{|M|}}{c_*^2 + s_1^2O_P(\sqrt{\log(s_1)}n^{-1/4})}.
\end{align*}
Also since for $v \in M$ and $j \in N$, $\mb X_v^T\mb E_j /n \sim N(0,1)$ we have
\[
\max_{j \in N} \|\tfrac{1}{n}\mb X_{M}^T \mb E_j\|_2^2 \leq |M|O_P(\log(p)/n). 
\]
Therefore
\begin{align*}
\max_{A \in \mathcal{J}} \sup_{\tau \in \R^{s_1}} \|\hat{\mbb\Sigma}_{N,S \cup A}\hat{\mbb\Sigma}^{-1}_{S \cup A,S \cup A}\mbb\tau\|_{\infty} &\leq (1+o_P(1)) \delta + \frac{s_1^2 o_P(1)}{c_*^2 + o_P(1)}.
\end{align*}
This shows that \eqref{eq:rand_irep} is satisfied with probability tending to 1.

Next 
\begin{align*}
\max_{j \in N} \max_{A \in \mathcal{J}} \frac{1}{n}|\mb X_j^T(\mb I-\mb P^{S \cup A}) \mb f^*| = \max_{j \in N} \max_{A \in \mathcal{J}} \frac{D_{jj}^{-1}}{n} |\mb E_j^T(\mb I-\mb P^{S \cup A}) \mb X_{A} \mbb\beta^*_{A}|.
\end{align*}
Since $\mb E_j^T(\mb I-\mb P^{S \cup A}) \mb X_{A} \mbb\beta^*_{A}/n \sim N(0, \|(\mb I-\mb P^{S \cup A}) \mb X_{A} \mbb\beta^*_{A}\|_2^2/n^2)$ we have
\[
\max_{j \in N} \max_{A \in \mathcal{J}} \frac{1}{n}|\mb X_j^T(\mb I-\mb P^{S \cup A}) \mb f^*| \leq \sqrt{\frac{\log(2^{s_1} p)}{n}} \frac{1}{\sqrt{n}}\|\mb X_{S_2^*} \mbb\beta^*_{S_2^*}\|_2 O_P(1).
\]
By \eqref{eq:c_max_inter} we have 
\[
 \frac{1}{\sqrt{n}}\|\mb X_{S_2^*} \mbb\beta^*_{S_2^*}\|_2 \leq  \{2{c^*}^2 + s_1^2\sqrt{\log(s_1)}n^{-1/4} O_P(1)\} \|\mbb\beta_{S_2^*}\|_2.
\]
Now using Lemma~\ref{lem:LHS_bd} we see that the difference between the LHS and RHS of \eqref{eq:rand_ent} is at least
\begin{align*}
\min_{j \in I^*} |\beta^*_j| - O_P(\sqrt{s_1^3\log(s_1)}n^{-1/3}) - \left(\frac{(\sqrt{s_1 +\log p} + \sigma\sqrt{\log p})/\sqrt{n}}{1-\delta + o_P(1)} + \sigma\sqrt{\frac{\log(p)}{n}}\right)s_1 O_P(1).
\end{align*}
Thus A4 ensures that \eqref{eq:rand_ent} holds with probability tending to 1. 
\qed

\end{document}